
\documentclass[11pt,a4paper]{article}
\usepackage{amsfonts}
\usepackage[utf8]{inputenc}
\usepackage[english]{babel}
\usepackage{paralist}


\setlength{\voffset}{-1,54cm}
\setlength{\topmargin}{1cm}
\setlength{\headheight}{0cm}
\setlength{\headsep}{0cm}
\setlength{\marginparwidth}{0cm}
\setlength{\evensidemargin}{1,5cm}
\setlength{\oddsidemargin}{1,5cm}
\setlength{\hoffset}{-1,54cm}
\setlength{\textwidth}{16cm}
\setlength{\textheight}{25cm}
\setlength{\footskip}{20pt}

\usepackage{amssymb}
\usepackage{amsmath}
\usepackage{multirow}
\usepackage{tabularx}
\usepackage{xcolor}
\usepackage{url}

\usepackage{mathtools}

\usepackage{hyperref}
\usepackage{cleveref}

\usepackage{ulem}

\newtheorem{theorem}{Theorem}[section]
\newtheorem{corollary}{Corollary}[section]
\newtheorem{proposition}{Proposition}[section]
\newtheorem{lemma}{Lemma}[section]
\newtheorem{example}{Example}[section]
\newtheorem{remark}{Remark}[section]
\newtheorem{definition}{Definition}[section]

\newenvironment{proof}[1][Proof.]{\vspace{0.5em}\textbf{#1} }{\
\hfill$\square$}

\newcommand{\F}{\mathbb{F}}

\begin{document}
\title{ $\mathbb{F}_{2}\mathbb{F}_{4}$-Additive Complementary Dual Codes}
	
\author{Souhila Ouagagui$^{1,2}$, Nasreddine Benbelkacem$^{2}$,\\Aicha  Batoul$^2$, Taher Abualrub$^3$}
\date{
\footnotesize $^1$  Institute of Electrical and Electronic Engineering,\\
University of M'Hamed Bougara, Boumerdes, Algeria\\
\footnotesize $^2$ Departement and Laboratory of ATN, Faculty of mathematics,\\
\footnotesize University of Science and Technology Houari Boumediene,  Algeries, Algeria\\
\footnotesize $^3$ Department of Mathematics and Statistics, College of Arts and Sciences,\\
\footnotesize American University of Sharjah, UAE\\
}
\maketitle

\begin{abstract}
	In this paper, we investigate the structure and properties of additive
complementary dual (ACD) codes over the mixed alphabet $\mathbb{F}_{2}%
\mathbb{F}_{4}$ relative to a certain inner product defined over $\mathbb{F%
}_{2}\mathbb{F}_{4}.$ We establish sufficient conditions under which
such codes are additive complementary dual (ACD) codes. We also show
that ACD codes over $\mathbb{F}_{2}\mathbb{F}_{4}$ can be applied to
construct binary linear complementary dual codes as their images under the linear map $W$. Notably, we prove that if the binary image of a code is LCD, then the original code is necessarily ACD. An example is given where the image is a distance-optimal binary LCD code.\\


\end{abstract}

\section{Introduction}

Linear codes of length $n$ over the finite field $\mathbb{F}_{q},$ where $q$
is a prime power are defined to be subspaces of the $\mathbb{F}_{q}$-vector
space $\mathbb{F}_{q}^{n}.$ On the other hand, additive codes of length $n$
over the finite field $\mathbb{F}_{q},$ where $q$ is a prime power are
defined to be subgroups of $\mathbb{F}_{q}^{n}.$ Additive codes over the finite field $\mathbb{F}_4$ have been particularly introduced for their applicability in the construction of quantum error-correcting codes \cite{calderbank1998}. These codes form additive groups in $\mathbb{F}_4^n$, but they are not necessarily $\mathbb{F}_4$-linear. A major advantage of these codes lies in their capacity to meet the orthogonality constraints essential for quantum error correction. They are especially used in the design of stabilizer quantum codes, which are fundamental for ensuring the integrity of quantum information.

The concept of linear complementary dual (LCD) codes was introduced by Massey in 1992~\cite{massey}. These codes have become very important in recent years because they help protect cryptographic systems from side-channel attacks. One reason for their popularity is that they are easy to build and have a simple algebraic structure. Several works have focused on the construction of LCD codes using different techniques. For instance, a method based on orthogonal matrices was proposed to construct such codes \cite{dougherty2017combinatorics}. Another contribution focused on repeated-root constacyclic codes over finite fields, where the authors showed that LCD constacyclic and self-dual negacyclic codes exist under certain conditions \cite{benahmed2019some}. A different approach was presented in \cite{zoubir2025lcd}, where the authors proved that if an LCD code is preserved under a transitive automorphism group, then its punctured version remains LCD. They showed that BCH codes and extended Gabidulin codes are LCD, and used this result to construct new LCD codes using Paley-type bipartite graphs and punctured MacDonald codes.

Recently, a new class of codes called additive complementary dual (ACD) codes has been introduced \cite{shi2023additive}. These codes are an extension of LCD codes. They are very important in communication theory because they  preserve the useful properties of LCD codes, such as error correction and easy decoding.

Another class of codes that has received a lot of attention recently is the class of codes over mixed alphabets. This kind of study helps design more flexible codes and discover new families with better properties. For instance, in \cite{benbelkacem2020linear} and  \cite{benbelkacem2022skew}, the authors constructed the general form of generator matrices of linear codes over the mixed alphabet $\mathbb{F}_4R$, and used it to determine whether codes with specific parameters can exist. In addition, they proved a version of the MacWilliams identity and explained how these codes can be used in DNA coding systems. In a related direction, in  \cite{benbelkacem2020z2z4}, the authors analyzed the structure of ACD codes and constructed infinite families of such codes. They provided the necessary conditions for a code to be ACD and used the Gray map to transform these codes into binary LCD codes. In \cite{abualrub2020optimal}, the authors focused on cyclic codes over the mixed alphabet $\mathbb{F}_2 \mathbb{F}_4$ and constructed several classes of codes with good parameters as an application of their results.

In this paper, we continue the study of additive codes over $\mathbb{F}_2\mathbb{F}_4$, focusing on constructions and properties derived from generator matrices. We establish sufficient conditions under which these codes satisfy the ACD property. Furthermore, we show
that ACD codes over $\mathbb{F}_{2}\mathbb{F}_{4}$ can be applied to construct LCD codes as their images under the linear map $W$.

This paper is organized as follows. In Section \ref{sec2:per}, we present the general form of the generator matrix of an $\mathbb{F}_{2}\mathbb{F}_{4}$-additive code, as well as its corresponding type. In Section \ref{sec3:per}, we investigate key properties of $\mathbb{F}_{2}\mathbb{F}_{4}$-additive codes and demonstrate that the dual of any $\mathbb{F}_{2}\mathbb{F}_{4}$-additive code is itself an $\mathbb{F}_{2}\mathbb{F}_{4}$-additive code. In Section \ref{sec4:per}, we provide several results and establish sufficient conditions under which such codes are ACD. 
In Section \ref{sec5:per}, we study various cases of ACD codes by analyzing whether their punctured codes are complementary dual. Moreover, we provide a counterexample to show that even when both punctured codes are complementary dual, the code $C$ is not necessarily an ACD code. In Section \ref{sec6:per}, we show that ACD codes over $\mathbb{F}_{2}\mathbb{F}_{4}$ can be applied to construct LCD codes as their images under the linear map $W$. Moreover, we establish an important result showing that if the binary image of an $\mathbb{F}_2\mathbb{F}_4$-additive code is an LCD code, then the original code must necessarily be an ACD code. We also present an example of an ACD code whose binary image is a distance-optimal binary LCD code. Finally, Section \ref{sec7:per} presents the conclusion and future work.
 
\section{Preliminaries}\label{sec2:per}

Let $\mathbb{F}_{2}=\left\{ 0,1\right\}$ be the binary finite field and $%
\mathbb{F}_{4}=\mathbb{F}_{2}\left[ x\right] /\left\langle
x^{2}+x+1\right\rangle =\left\{ 0,1,\omega ,\omega ^{2}=\omega +1\right\} $
be the finite field of $4$ elements such that $\omega ^{2}+\omega +1=0.$ A
linear code of length $\alpha $ over $\mathbb{F}_{2}$ is a subspace of $%
\mathbb{F}_{2}^{\alpha }$ and an additive code of length $\beta $ over $%
\mathbb{F}_{4}$ is a subgroup of the additive group $\mathbb{F}_{4}^{\beta
}.$ A linear code $C_{1}$ of length $\alpha $ over $\mathbb{F}%
_{2}$ will have $\left\vert C_{1}\right\vert =2^{k_{1}},$ where $\dim \left(
C_{1}\right) =k_{1},$ with $0\leq k_{1}\leq \alpha ,$ while an additive code
of length $\beta $ over $\mathbb{F}_{4}$ will have $\left\vert
C_{2}\right\vert =2^{k_{2}},$ where $0\leq k_{2}\leq 2\beta .$

\begin{definition}
A non-empty subset $C$ of $\mathbb{F}_{2}^{\alpha }\mathbb{F}_{4}^{\beta }$
is called an $\mathbb{F}_{2}\mathbb{F}_{4}$-additive code of length $%
n=\alpha +\beta $ if $C$ is a subgroup of $\mathbb{F}_{2}^{\alpha }\mathbb{F}%
_{4}^{\beta}.$
\end{definition}

This class of codes was introduced in \cite{abualrub2020optimal} and it is an
example of codes over mixed alphabets similar to codes over $\mathbb{Z}_{2}%
\mathbb{Z}_{4}$ and $\mathbb{Z}_{2}\mathbb{Z}_{2^{s}}$. 
\cite{borges2010linear}, \cite{fernandez2010linear}, \cite{aydougdu2013structure}. 

Notice that the set $\mathbb{F}_{2}^{\alpha }\mathbb{F}_{4}^{\beta }$ is
also a vector space over the finite field $\mathbb{F}_{2}$ under the standard addition and scalar multiplication $\pmod{2}$. Hence, $\mathbb{F}%
_{2}\mathbb{F}_{4}$-additive codes of length $n=\alpha +\beta $ are also $%
\mathbb{F}_{2}$-subspaces of $\mathbb{F}_{2}^{\alpha }\mathbb{F}_{4}^{\beta}.$ 

Let $X$ be the set of the first $\alpha $ coordinate positions over $\mathbb{%
F}_{2}$, and $Y$ be the set of the last $\beta $ coordinate positions over $%
\mathbb{F}_{4}.$ Denote by $C_{X}$ and $C_{Y}$ the punctured codes obtained
from $C$ by removing the coordinates not in $X$ and $Y$, respectively. I.e., 
$C_{X}=\left\{ c_{1}:\left( c_{1}|c_{2}\right) \in C\right\} $ and $%
C_{Y}=\left\{ c_{2}:\left( c_{1}|c_{2}\right) \in C\right\} .$ Moreover, let 
$C_{b}$ be the subcode of $C$ which contains all the binary codewords in $C$
and let $k_{2}^{\prime \prime }$ be the dimension of $\left( C_{b}\right)
_{X}$ which is a binary linear code. Based on this discussion, it is clear
that $C$ is isomorphic to $\mathbb{F}_{2}$-subspace of the form $\mathbb{F}%
_{2}^{k_{1}}\times \mathbb{F}_{2}^{k_{2}^{\prime \prime }}\times \mathbb{F}%
_{2}^{2k_{2}^{\prime }}$ and $\left\vert C\right\vert
=2^{k_{1}+k_{2}^{\prime \prime }+2k_{2}^{\prime }},$ where $k_{1}\leq \alpha 
$ and $k_{2}=k_{2}^{\prime \prime }+2k_{2}^{\prime }\leq \beta .$ We will
say that $C$ is an additive code of type $\left( \alpha ,\beta ;k_{1},k_{2}^{\prime
},k_{2}^{\prime \prime }\right).$ 
Note that $C_{X}$ is a binary linear code of type $\left( \alpha ,0 ;k_{1},0,0\right),$ and $C_{Y}$ is a quaternary additive code of type $\left( 0 ,\beta ;0,k_{2}^{\prime
},k_{2}^{\prime \prime }\right).$  

Any element $z$ in $\mathbb{F}_{4}$ can be written uniquely as $z=b+\omega q$,
where $b$ and $q$ are binary numbers. The mapping $W$ given by: 
\begin{eqnarray*}
W &:&\mathbb{F}_{2}^{\alpha }\mathbb{F}_{4}^{\beta }\rightarrow \mathbb{F}%
_{2}^{\alpha +2\beta } \\
&&W\left( a_{0},a_{1},\ldots ,a_{\alpha-1}|\left( b_{0}+\omega q_{0}\right),
\left( b_{1}+\omega q_{1}\right), \ldots ,\left( b_{\beta-1}+\omega q_{\beta-1}\right)
\right) \\
&=&\left( a_{0},a_{1},\ldots ,a_{\alpha-1}|\left( b_{0}+q_{0}\right), \ldots ,\left(
b_{\beta-1}+q_{\beta-1}\right) |q_{0},\ldots ,q_{\beta-1}\right),
\end{eqnarray*}%
was defined in \cite{Abualrub2020optimal}, where it was also proved that $W$ is a linear bijection.

The next result provides a generator matrix for additive codes over $\mathbb{%
F}_{2}\mathbb{F}_{4}.$

\begin{lemma}
Let $C$ be an $\mathbb{F}_{2}\mathbb{F}_{4}$-additive code of length $%
n=\alpha +\beta ,$ then $C$ is equivalent to an $\mathbb{F}_{2}\mathbb{F}%
_{4}$-additive code of length $n=\alpha +\beta $ with a generator matrix of
the form:%
\begin{equation*}
G=\left( 
\begin{tabular}{ll|llll}
$I_{k_{1}}$ & $A_{1}$ & $0$ & $0$ & $\omega B_{1}$ &  \\ \hline
$0$ & $C_{1}$ & $I_{k_{2}^{\prime }}$ & $\omega D_{1}$ & $D_{2}$ &  \\ 
$0$ & $0$ & $\omega I_{k_{2}^{\prime }}$ & $\omega D_{3}$ & $D_{4}$ &  \\ 
$0$ & $0$ & $0$ & $I_{k_{2}^{\prime \prime }}$ & $D_{5}$ & 
\end{tabular}%
\right),
\end{equation*}%
where $I_{k_1}$, $I_{k_2'}$, and $I_{k_2''}$ are identity matrices with $k_2 = 2k_2' + k_2''$. The matrices $D_2$ and $D_4$ are defined over $\mathbb{F}_4$, while $A_1$, $B_1$, $C_1$, $D_1$, $D_3$, and $D_5$ are all binary matrices.
\end{lemma}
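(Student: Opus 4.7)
The plan is to construct the claimed generator matrix by a sequence of code-equivalence operations: $\mathbb{F}_2$-linear row operations on the generators, permutations of the first $\alpha$ coordinates (the $X$-block), and permutations of the last $\beta$ coordinates (the $Y$-block). I would proceed in three phases: first normalize the projection onto $X$, then put the kernel of the $X$-projection into standard form, and finally clean up the off-diagonal blocks.

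First, consider the punctured code $C_X$, a binary linear code of dimension $k_1$. By permuting the first $\alpha$ coordinates and applying row operations, select $k_1$ generators of $C$ whose $X$-parts take the systematic form $(I_{k_1} \mid A_1)$ for a binary matrix $A_1$; these become the first row block. Every other generator's $X$-part lies in $C_X$, so $\mathbb{F}_2$-row operations with the first $k_1$ generators force the remaining $2k_2' + k_2''$ generators to have zero $X$-part. These generators lie in $C^{(0)} := \{c \in C : c|_X = \mathbf{0}\}$, naturally identified with an additive $\mathbb{F}_4$-code of length $\beta$ and order $2^{2k_2' + k_2''}$.

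The core of the argument is to put the additive $\mathbb{F}_4$-code $C^{(0)}$ into the structured form shown in the last three row blocks. I would consider $V := \{v \in C^{(0)} : \omega v \in C^{(0)}\}$, the maximal $\mathbb{F}_4$-linear subcode of $C^{(0)}$, which has $\mathbb{F}_4$-dimension $k_2'$. Choose an $\mathbb{F}_4$-basis of $V$ and permute the $Y$-coordinates to put it in systematic form $(I_{k_2'} \mid \ast)$; over $\mathbb{F}_2$ this gives the second row block, and its $\omega$-multiples give the third. The quotient $C^{(0)}/V$ is elementary abelian of rank $k_2''$; a basis may be lifted to binary vectors of $C^{(0)}$ (by maximality of $V$), and a permutation of the $Y$-coordinates outside the first $k_2'$ brings these lifts to the shape $(\mathbf{0} \mid I_{k_2''} \mid D_5)$, yielding the fourth row block.

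It remains to clean the cross terms. Rows from the third and fourth blocks have zero $X$-part, so adding $\mathbb{F}_2$-multiples of them to the first block preserves the $X$-normalization. These operations zero out the first $k_2'$ and the middle $k_2''$ $Y$-columns of the first row block, and similarly simplify the $I_{k_2''}$ columns of the second row block. The delicate point, and the main obstacle, is verifying that the residual entries of the first row block in the last $\beta - k_2' - k_2''$ columns take the form $\omega B_1$ for some binary $B_1$. I expect this to follow from a dimension count: any binary contribution to those columns has already been absorbed into the fourth row block, so the only remaining residue lies in $\omega \mathbb{F}_2^{\beta - k_2' - k_2''}$. A careful tracking of the interaction between the $\mathbb{F}_2$-structure and the $\mathbb{F}_4$-action will be needed to make this last step precise.
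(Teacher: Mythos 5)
Your overall strategy --- systematic form on the $X$-projection, a standard form for the additive $\mathbb{F}_4$-code $C^{(0)}$ sitting in the kernel, then clearing cross terms --- is essentially the paper's strategy; the paper simply cites the binary systematic form and the Calderbank--Rains--Shor--Sloane standard form for additive $\mathbb{F}_4$-codes and asserts that ``row and column operations'' finish the job. Two points in your write-up do not go through as stated. First, a basis of $C^{(0)}/V$ need not lift to \emph{binary} vectors after mere coordinate permutations: already $C^{(0)}=\{0,\omega\}\subseteq\mathbb{F}_4^{1}$ has no binary generator. To reach the fourth row block you must enlarge the allowed column operations to include multiplication of $Y$-coordinates by units of $\mathbb{F}_4$ (part of the usual monomial equivalence for additive codes); with permutations alone this step fails.

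Second, the gap you flag at the end is genuine, and it is worse than you suggest: it already occurs in the middle $k_2''$ columns, where the claimed form demands a zero block in the first row block. Using the fourth row block you can clear only the \emph{binary} part of those entries; a residue in $\omega\mathbb{F}_2$ survives and cannot be removed by any available operation (the second and third row blocks would reintroduce entries in the first $k_2'$ columns, and rescaling such a column would destroy $I_{k_2''}$). Indeed the statement as printed fails: for $C=\langle(1\mid\omega),(0\mid 1)\rangle\subseteq\mathbb{F}_2^{1}\mathbb{F}_4^{1}$ one is forced to $k_1=k_2''=1$, $k_2'=0$, so the claimed form is $\bigl(\begin{smallmatrix}1&0\\0&1\end{smallmatrix}\bigr)$, which generates a code with two codewords of Hamming weight one, whereas $C$ has only one; the two codes are therefore not equivalent under any weight-preserving equivalence. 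The correct normal form (compare the $\mathbb{Z}_2\mathbb{Z}_4$ case, where the analogous block is $2T_2$ rather than $0$) should carry a block $\omega B_0$ with $B_0$ binary in that position. So the ``dimension count'' you hope for cannot close the gap; your instinct that this is the delicate point is exactly right, and the paper's own proof hand-waves at precisely this step.
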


\begin{proof}
We know that any linear code over $\mathbb{F}_{2}$ is a permutation
equivalent to a code with a generator matrix of the form $A=\left(
I_{k_{1}}|A_{1}\right) ,$ where $A_{1}$ is a binary matrix over $\mathbb{F}%
_{2}.$ In \cite{calderbank1998} showed that any additive code over $\mathbb{F%
}_{4}$ has a generator matrix of the form:\\ $$D=\left( 
\begin{tabular}{llll}
$I_{k_{2}^{\prime }}$ & $\omega D_{1}$ & $D_{2}$ &  \\ 
$\omega I_{k_{2}^{\prime }}$ & $\omega D_{3}$ & $D_{4}$ &  \\ 
$0$ & $I_{k_{2}^{\prime \prime }}$ & $D_{5}$ &  \\ 
&  &  & 
\end{tabular}%
\right) .$$ \\As a result of these two matrices and arranging the matrix $A$ in
the first block and the matrix $D$ in the last block while completing the
remaining blocks with appropriate matrices and applying row and column
operations, we get that $C$ has a generator matrix of the form: $$G=\left( 
\begin{tabular}{ll|llll}
$I_{k_{1}}$ & $A_{1}$ & $0$ & $0$ & $\omega B_{1}$ &  \\ \hline
$0$ & $C_{1}$ & $I_{k_{2}^{\prime }}$ & $\omega D_{1}$ & $D_{2}$ &  \\ 
$0$ & $0$ & $\omega I_{k_{2}^{\prime }}$ & $\omega D_{3}$ & $D_{4}$ &  \\ 
$0$ & $0$ & $0$ & $I_{k_{2}^{\prime \prime }}$ & $D_{5}$ & 
\end{tabular}%
\right),$$where $I_{k_1}$, $I_{k_2'}$, and $I_{k_2''}$ are identity matrices with $k_2 = 2k_2' + k_2''$. The matrices $D_2$ and $D_4$ are defined over $\mathbb{F}_4$, while $A_1$, $B_1$, $C_1$, $D_1$, $D_3$, and $D_5$ are all binary matrices.
\end{proof}
\section{Duality of $\mathbb{F}_{2}\mathbb{F}_{4}$-additive codes} \label{sec3:per}

In this section, we introduce the standard inner product over $\F_2^\alpha\F_4^\beta$, as well as the corresponding dual code, focusing on some of their properties.\\
Let $u=\left( a_{0},a_{1},\ldots ,a_{\alpha -1}|b_{0},b_{1},\ldots ,b_{\beta
-1}\right)$,
$v=\left( c_{0},c_{1},\ldots ,c_{\alpha -1}|d_{0},d_{1},\ldots ,d_{\beta -1}\right)$
$\in \mathbb{F}_{2}^{\alpha }\mathbb{F}_{4}^{\beta }$.  The standard inner product on $\mathbb{F}_{2}^{\alpha }\mathbb{F}_{4}^{\beta }$ is defined by:

\begin{equation*}
\left\langle u,v\right\rangle _{4}=\left( \omega \sum_{i=0}^{\alpha
-1}a_{i}c_{i}+\sum_{j=0}^{\beta -1}b_{j}d_{j}\right) \in \mathbb{F}_{4}.
\end{equation*}

Let $C$ be an $\mathbb{F}_{2}\mathbb{F}_{4}$-additive code of length $%
n=\alpha +\beta $ and consider the punctured codes $C_{X}$ and $C_{Y}$ with
generator matrices $G_{X}$ and $G_{Y},$ respectively. Then $G=\left(
G_{X}|G_{Y}\right) $ is a generator matrix for $C.$ Notice that $$%
GG^{t}=\left( G_{X}|G_{Y}\right) \left( 
\begin{array}{c}
G_{X}^{t} \\ 
G_{Y}^{t}%
\end{array}%
\right) =\omega G_{X}G_{X}^{t}+G_{Y}G_{Y}^{t}\in \mathbb{F}_{4},$$ based on
the inner product defined above.

\begin{definition}
Let $C$ be an $\mathbb{F}_{2}\mathbb{F}_{4}$-additive code of length $%
n=\alpha +\beta.$ The dual code of $C$ is defined as $C^{\perp }=\left\{
v\in \mathbb{F}_{2}^{\alpha }\mathbb{F}_{4}^{\beta } \mid\left\langle
v,u\right\rangle _{4}=0,~\forall u\in C~\right\} .$
\end{definition}

\begin{lemma}
Let $C$ be an $\mathbb{F}_{2}\mathbb{F}_{4}$-additive code of length $%
n=\alpha +\beta .$ Then $C^{\perp }$ is an  $\mathbb{F}_{2}\mathbb{F}%
_{4}$-additive code of length $n=\alpha +\beta .$
\end{lemma}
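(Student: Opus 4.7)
The plan is to verify directly that $C^\perp$ satisfies the definition of an $\mathbb{F}_2\mathbb{F}_4$-additive code, namely that it is a subgroup of $\mathbb{F}_2^\alpha\mathbb{F}_4^\beta$. Since the ambient group has characteristic $2$, every element equals its own inverse, so it suffices to show that $C^\perp$ is non-empty and closed under addition.

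First I would note that $\mathbf{0}\in C^\perp$ is immediate, since $\langle \mathbf{0},u\rangle_4=0$ for all $u\in C$. The substantive step is closure under addition. Given $v_1=(a'\mid b')$ and $v_2=(a''\mid b'')$ in $C^\perp$, their sum $v_1+v_2=(a'+a''\mid b'+b'')$ lies in $\mathbb{F}_2^\alpha\mathbb{F}_4^\beta$, where the first block is added modulo $2$ and the second block in $\mathbb{F}_4$. For an arbitrary $u=(c\mid d)\in C$, the plan is to expand
$$\langle v_1+v_2,u\rangle_4=\omega\sum_{i=0}^{\alpha-1}(a'_i+a''_i)c_i+\sum_{j=0}^{\beta-1}(b'_j+b''_j)d_j$$
and use the distributive law in $\mathbb{F}_4$ to split this as $\langle v_1,u\rangle_4+\langle v_2,u\rangle_4=0+0=0$, whence $v_1+v_2\in C^\perp$.

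The main potential obstacle is only the mixed-alphabet bookkeeping: the coordinates $a'_i+a''_i$ are computed in $\mathbb{F}_2$ while the outer sum is read as an element of $\mathbb{F}_4$. This is resolved by the observation that the inclusion $\mathbb{F}_2\hookrightarrow\mathbb{F}_4$ respects addition, so mod-$2$ addition of the first-block entries agrees with their addition inside $\mathbb{F}_4$, and the bilinear splitting above is justified. With those two verifications, $C^\perp$ is an additive subgroup of $\mathbb{F}_2^\alpha\mathbb{F}_4^\beta$ of length $n=\alpha+\beta$, which is precisely what the lemma asserts.
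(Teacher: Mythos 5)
Your proposal is correct and follows essentially the same route as the paper: both arguments verify the subgroup property by using the additivity of the inner product in its first argument to show closure of $C^\perp$ under addition (the paper phrases it as closure under subtraction, which is the same thing in characteristic $2$). Your extra remarks on non-emptiness and on the compatibility of mod-$2$ addition with the embedding $\mathbb{F}_2\hookrightarrow\mathbb{F}_4$ are sound but do not change the substance of the argument.
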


\begin{proof}
To prove that $C^{\perp }$ is an  $\mathbb{F}_{2}\mathbb{F}_{4}$%
-additive code of length $n=\alpha +\beta ,$ we only need to prove that $%
C^{\perp }$ is an $\mathbb{F}_{2}$-subspace of $\mathbb{F}_{2}^{\alpha }%
\mathbb{F}_{4}^{\beta }.$ Let $u,v\in C^{\perp }$ and let $w\in C.$ Then $\left\langle u,w\right\rangle
_{4}=\left\langle v,w\right\rangle _{4}=0$ and $\left\langle
u-v,w\right\rangle _{4}=\left\langle u,w\right\rangle _{4}-\left\langle
v,w\right\rangle _{4}=0.$ Therefore, $C^{\perp }$ is an additive code.
\end{proof}

It is also worth mentioning that $C^{\perp }$ is not a linear code over $%
\mathbb{F}_{4}$ because if $v\in C^{\perp },$ then $\omega v$ is not defined
and hence it is not in $C^{\perp }.$\\

We now define the duals of ${C_X}$ and $C_Y$ with respect to the Euclidean inner products on $\mathbb{F}_2^\alpha$ and $\mathbb{F}_4^\beta$, respectively. The dual of ${C_X}$ is
$${C_X}^\perp=\left\{
c\in \mathbb{F}_{2}^{\alpha } \mid\left\langle
c,a\right\rangle =0,~\forall a\in {C_X}~\right\},$$ and the dual of ${C_Y}$ is $${C_Y}^\perp=\left\{
d\in \mathbb{F}_{4}^{\beta } \mid\left\langle
d,b\right\rangle =0,~\forall b\in {C_Y}~\right\}.$$ 

As previously noted, $C_{X}$ is a binary linear code of length $\alpha$, and $C_{Y}$ is a quaternary additive code of length $\beta $. The following lemma provides a description of their duals.

\begin{lemma}
Let $C$ be an $\mathbb{F}_{2}\mathbb{F}_{4}$-additive code of length $%
n=\alpha +\beta .$ Then $C_{X}^{\perp }$ is a binary linear code of length $%
\alpha $, while the code $C_{Y}^{\perp }$ is a linear code of length $\beta $
over $\mathbb{F}_{4}.$
\end{lemma}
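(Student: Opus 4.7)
The proof splits naturally into two independent parts, one for each punctured dual.

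For $C_X^\perp$, the plan is immediate: since $C_X$ sits inside $\mathbb{F}_2^\alpha$ and is a binary linear code, its Euclidean dual is automatically an $\mathbb{F}_2$-subspace of $\mathbb{F}_2^\alpha$, hence a binary linear code of length $\alpha$. I would just remark that closure under sums and under the (only nontrivial) scalar $1 \in \mathbb{F}_2$ follows from bilinearity of the Euclidean pairing, which is standard.

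For $C_Y^\perp$, the interesting point is that although $C_Y$ is merely an additive subgroup of $\mathbb{F}_4^\beta$, its Euclidean dual is actually $\mathbb{F}_4$-linear. I would first check closure under addition: if $d,d' \in C_Y^\perp$, then for every $b \in C_Y$ one has $\langle d+d',b\rangle = \langle d,b\rangle + \langle d',b\rangle = 0$, so $d+d' \in C_Y^\perp$. The key step is closure under scalar multiplication: for any $\lambda \in \mathbb{F}_4$, $d \in C_Y^\perp$, and $b \in C_Y$, the bilinearity of the Euclidean inner product over $\mathbb{F}_4$ gives $\langle \lambda d, b\rangle = \lambda \langle d,b\rangle = \lambda \cdot 0 = 0$. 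Hence $\lambda d \in C_Y^\perp$, which promotes $C_Y^\perp$ from an additive to an $\mathbb{F}_4$-linear code.

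I do not expect a real obstacle here; the argument is a one-line use of bilinearity in each case. The only subtle point worth emphasizing in the write-up is the asymmetry with the mixed dual $C^\perp$ discussed just above the lemma: there, the inner product $\langle \cdot,\cdot\rangle_4$ is only $\mathbb{F}_2$-bilinear (because of the factor $\omega$ in the $X$-block and the fact that scalar multiplication by $\omega$ is not even defined on the $\mathbb{F}_2^\alpha$ part), which is why $C^\perp$ is only additive, whereas here we are working with a genuine $\mathbb{F}_4$-bilinear pairing on the pure $\mathbb{F}_4^\beta$ block, so the scalar-multiplication step goes through without any such restriction.
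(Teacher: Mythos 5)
Your proposal is correct and follows essentially the same route as the paper: the $C_X^\perp$ part is dismissed as standard, and $\mathbb{F}_4$-linearity of $C_Y^\perp$ is obtained from closure under addition together with $\langle \lambda d, b\rangle = \lambda\langle d,b\rangle = 0$ for $\lambda\in\mathbb{F}_4$. Your closing remark contrasting this with the merely $\mathbb{F}_2$-bilinear mixed pairing $\langle\cdot,\cdot\rangle_4$ is a nice observation but not part of the paper's argument.
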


\begin{proof} The case that $C_{X}^{\perp }$ is a binary linear code of length $\alpha $
is clear. We will show that $C_{Y}^{\perp }$ is a linear code of length $\beta 
$ over $\mathbb{F}_{4}.$ Let $d_{1}, d_{2}\in C_{Y}^{\perp }$ and let $b\in C_{Y}.$
Then $\left\langle d_{1},b\right\rangle =\left\langle d_{2},b\right\rangle =0$ and $%
\left\langle d_{1}-d_{2},b\right\rangle =\left\langle d_{1},b\right\rangle -\left\langle
d_{2},b\right\rangle =0.$ Moreover, we have $\left\langle zd_{1},b\right\rangle
=z\left\langle d_{1},b\right\rangle =0$ for any $z\in \mathbb{F}_{4}.$
Therefore, $C_{Y}^{\perp }$ is a linear code of length $\beta $ over $%
\mathbb{F}_{4}.$
\end{proof}

\begin{definition}
An additive code $C$ over $\mathbb{F}_{2}\mathbb{F}_{4}$ is called separable
if ${\mathcal{C}}={\mathcal{C}}_{X}\times {\mathcal{\ C}}_{Y}.$    
\end{definition}

\begin{lemma}
\label{lemma 7}
Suppose that ${\mathcal{C}}={\mathcal{C}}_{X}\times {\mathcal{\ C}}_{Y}$ is
a separable additive code over $\mathbb{F}_{2}\mathbb{F}_{4}.$ Then ${%
\mathcal{C}}_{X}^{\perp }\times {\mathcal{\ C}}_{Y}^{\perp }\subseteq
C^{\perp }.$
\end{lemma}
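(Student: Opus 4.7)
The plan is to unpack the definition of separability together with the definition of the inner product $\langle\cdot,\cdot\rangle_4$ and reduce the ambient orthogonality to the two one-sided orthogonalities on $\mathbb{F}_2^\alpha$ and $\mathbb{F}_4^\beta$. Concretely, I would pick an arbitrary element $(u_1|u_2)\in\mathcal{C}_X^\perp\times\mathcal{C}_Y^\perp$ and show that it is orthogonal, in the sense of $\langle\cdot,\cdot\rangle_4$, to every codeword of $\mathcal{C}$.

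First I would use the hypothesis $\mathcal{C}=\mathcal{C}_X\times\mathcal{C}_Y$ to write an arbitrary codeword of $\mathcal{C}$ in the form $(v_1|v_2)$ with $v_1\in\mathcal{C}_X$ and $v_2\in\mathcal{C}_Y$; this is the whole point of separability, and without it we could not independently vary the two halves. Then I would expand the mixed inner product as
\begin{equation*}
\langle (u_1|u_2),(v_1|v_2)\rangle_4=\omega\sum_{i=0}^{\alpha-1}u_{1,i}v_{1,i}+\sum_{j=0}^{\beta-1}u_{2,j}v_{2,j}=\omega\langle u_1,v_1\rangle+\langle u_2,v_2\rangle,
\end{equation*}
which is just the definition given before the statement. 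Since $u_1\in\mathcal{C}_X^\perp$ and $v_1\in\mathcal{C}_X$, the first summand vanishes; since $u_2\in\mathcal{C}_Y^\perp$ and $v_2\in\mathcal{C}_Y$, the second summand also vanishes. Hence $(u_1|u_2)\in\mathcal{C}^\perp$, and the inclusion follows.

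There is essentially no obstacle here beyond bookkeeping: the assertion is just that the split inner product respects a Cartesian product decomposition of the code. The only subtlety worth flagging is that the equality of sets in the conclusion is replaced by an inclusion because $\mathcal{C}^\perp$ might in principle contain additional mixed vectors $(u_1|u_2)$ for which the two contributions cancel without each being individually zero; but since the first contribution lives in $\omega\mathbb{F}_2\subseteq\mathbb{F}_4$ and the second in $\mathbb{F}_2\subseteq\mathbb{F}_4$, cancellation would in fact force both to vanish, which would upgrade the inclusion to an equality. I would note this observation in a remark after the proof but not claim it in the statement, since the lemma as stated only requires the one-sided containment.
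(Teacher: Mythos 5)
Your proof is correct and takes essentially the same route as the paper's: both pick an arbitrary $(u_1|u_2)\in\mathcal{C}_X^{\perp}\times\mathcal{C}_Y^{\perp}$, use separability to write a general codeword as $(v_1|v_2)$ with the halves varying independently, expand $\langle\cdot,\cdot\rangle_4$ as $\omega\langle u_1,v_1\rangle+\langle u_2,v_2\rangle$, and note that each summand vanishes. One caveat concerns only your closing side remark: the quaternary contribution $\langle u_2,v_2\rangle$ lies in $\mathbb{F}_4$, not in $\mathbb{F}_2$, so the claimed basis argument for upgrading the inclusion to an equality fails (for instance $\omega\cdot 1+\omega=0$ gives a cancellation with neither term zero); equality does hold for separable codes, but one obtains it by specializing $v_1=0$ and $v_2=0$ separately, not by your independence-of-components argument.
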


\begin{proof}
Let $v=\left( c|d\right) \in {\mathcal{C}}_{X}^{\perp }\times {\mathcal{\ C}}%
_{Y}^{\perp }$ and $u=\left( a|b\right) \in C.$ Then $c\in {\mathcal{C}}%
_{X}^{\perp },~a\in C_{X},$ $d\in {\mathcal{C}}_{Y}^{\perp }$ and $b\in
C_{Y}.$ Thus, $\sum\limits_{i=0}^{\alpha-1 }c_{i}a_{i}=0$ in $\mathbb{F}_{2}$
and $\sum\limits_{j=0}^{\beta-1 }d_{j}b_{j}=0$ in $\mathbb{F}_{4}.$ Hence, $%
\left\langle v,u\right\rangle _{4}=\left( \omega \sum\limits_{i=0}^{\alpha
-1}c_{i}a_{i}+\sum\limits_{j=0}^{\beta -1}d_{j}b_{j}\right) =0$ and $v\in C^{\perp
}.$ Therefore, ${\mathcal{C}}_{X}^{\perp }\times {\mathcal{\ C}}_{Y}^{\perp
}\subseteq C^{\perp }.$
\end{proof}

The following key lemma arises as a direct application of the mapping $W$.

\begin{lemma}
\label{W(Dual-Code)} Let $C$ be an $\mathbb{F}_{2}\mathbb{F}_{4}$-additive
code of length $n=\alpha +\beta.$ Then, $$W\left(C^{\perp }\right) \subseteq
W\left(C\right)^{\perp }.$$
\end{lemma}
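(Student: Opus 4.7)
The plan is to verify the inclusion element-wise: given $u\in C^\perp$, I need to show that $W(u)$ is orthogonal (in the binary Euclidean sense) to every element of $W(C)$. Since $W$ is a bijection, every element of $W(C)$ has the form $W(v)$ for some $v\in C$, so the goal reduces to proving the implication
\[
\langle u,v\rangle_4=0 \ \text{in}\ \mathbb{F}_4 \ \Longrightarrow\ \langle W(u),W(v)\rangle=0\ \text{in}\ \mathbb{F}_2.
\]

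The main computation expands both inner products in coordinates. Write $u=(a_0,\dots,a_{\alpha-1}\mid b_0+\omega q_0,\dots,b_{\beta-1}+\omega q_{\beta-1})$ and similarly $v=(a'_i\mid b'_j+\omega q'_j)$. Using $\omega^2=\omega+1$, I expand
\[
(b_j+\omega q_j)(b'_j+\omega q'_j)=(b_jb'_j+q_jq'_j)+\omega\bigl(b_jq'_j+q_jb'_j+q_jq'_j\bigr),
\]
and combine with the $\omega\sum a_ia'_i$ contribution to decompose
\[
\langle u,v\rangle_4 = R + \omega I,\qquad R,I\in\mathbb{F}_2,
\]
where $R=\sum_j(b_jb'_j+q_jq'_j)$ and $I=\sum_i a_ia'_i+\sum_j(b_jq'_j+q_jb'_j+q_jq'_j)$. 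Since $\{1,\omega\}$ is an $\mathbb{F}_2$-basis of $\mathbb{F}_4$, the hypothesis $\langle u,v\rangle_4=0$ forces $R=0$ and $I=0$ separately.

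Next I compute the binary inner product directly from the definition of $W$:
\[
\langle W(u),W(v)\rangle=\sum_i a_ia'_i+\sum_j(b_j+q_j)(b'_j+q'_j)+\sum_j q_jq'_j,
\]
and expanding the middle sum in $\mathbb{F}_2$ (so $2\sum q_jq'_j=0$) yields
\[
\langle W(u),W(v)\rangle=\sum_i a_ia'_i+\sum_j b_jb'_j+\sum_j(b_jq'_j+q_jb'_j).
\]
Comparing coefficients, this is precisely $R+I$ modulo $2$. Hence $\langle W(u),W(v)\rangle=R+I=0$, which gives $W(u)\in W(C)^\perp$ and completes the inclusion.

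The argument is almost entirely bookkeeping; the only delicate point is the expansion $(b+\omega q)(b'+\omega q')$ together with the use of $\omega^2=\omega+1$ to identify the $\mathbb{F}_2$-coefficients $R$ and $I$. Once that is in hand, the cancellation $\sum_j q_jq'_j+\sum_j q_jq'_j=0$ in $\mathbb{F}_2$ makes the identity $\langle W(u),W(v)\rangle=R+I$ immediate, so there is no real obstacle beyond careful parity tracking.
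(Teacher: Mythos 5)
Your proof is correct and follows essentially the same route as the paper's: expand $\langle u,v\rangle_4$ into its $1$- and $\omega$-components over the basis $\{1,\omega\}$, conclude both vanish, and observe that the binary inner product $\langle W(u),W(v)\rangle$ equals their sum (the paper substitutes $\sum q_jq'_j=\sum b_jb'_j$ rather than adding the two components, but the computation is the same). No gaps.
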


\begin{proof}
Suppose that $v=\left( a|c+\omega q\right) \in C^{\perp }$ and $w=W\left(
v\right) =\left( a|c+q|q\right) \in W\left( C^{\perp }\right) $. Let $%
w^{\prime }=W\left( v^{\prime }\right) =\left( a^{\prime }|c^{\prime
}+q^{\prime }|q^{\prime }\right) \in W\left( C\right) ,$ where $v^{\prime
}=\left( a^{\prime }|b^{\prime }+\omega q^{\prime }\right) \in C.$ Then%
\begin{equation*}
\left\langle v,v^{\prime }\right\rangle _{4}=cc^{\prime }+qq^{\prime
}+\omega \left( aa^{\prime }+cq^{\prime }+qc^{\prime }+qq^{\prime }\right)
=0.
\end{equation*}%
Hence, $cc^{\prime }+qq^{\prime }=0$ and $aa^{\prime }+cq^{\prime
}+qc^{\prime }+qq^{\prime }=0.$ Thus, $aa^{\prime }+cq^{\prime }+qc^{\prime
}+cc^{\prime }=0$ and $w\in W\left( C\right) ^{\perp }.$
Therefore, $W\left( C^{\perp }\right) \subseteq W\left( C\right) ^{\perp }.$
\end{proof}

The next example shows that $W\left( C^{\perp }\right) \neq W\left( C\right)
^{\perp }$ for such code $C.$

\begin{example}
Let $C$ be an $\mathbb{F}_{2}\mathbb{F}_{4}$-additive code of type $\left( 2 ,2 ;0,1,0 \right)$, with
generator matrix: $$G=\left( 
\begin{tabular}{ll|ll}
$1$ & $1$ & $\omega $ & $1$ \\ 
$0$ & $1$ & $\omega^2 $ & $\omega $%
\end{tabular}%
\right).$$ Then $C^{\perp }$ has a generator matrix given by: $$\left( 
\begin{tabular}{ll|ll}
$0$ & $0$ & $1$ & $\omega $ \\ 
$0$ & $0$ & $\omega $ & $\omega ^{2}$%
\end{tabular}%
\right).$$ The generator matrix for $W\left( C^{\perp }\right) $ is: $$\left( 
\begin{tabular}{llllll}
$0$ & $0$ & $1$ & $1$ & $0$ & $1$ \\ 
$0$ & $0$ & $1$ & $0$ & $1$ & $1$%
\end{tabular}%
\right),$$ and $\left\vert W\left( C^{\perp }\right) \right\vert =4.$ We also
have $W\left( C\right) $ is a binary linear code of length $6$ and dimension $2$
with generator matrix given by: $$W\left( G\right) =\left( 
\begin{tabular}{llllll}
$1$ & $1$ & $1$ & $1$ & $1$ & $0$ \\ 
$0$ & $1$ & $0$ & $1$ & $1$ & $1$%
\end{tabular}%
\right) .$$ Thus, $\left\vert W\left( C\right) \right\vert =4$ and $%
\left\vert W\left( C\right) ^{\perp }\right\vert =\dfrac{2^{6}}{4}=16.$
Therefore, $W\left( C^{\perp }\right) \neq W\left( C\right) ^{\perp }.$
\end{example}

\begin{corollary}
Let $C$ be an $\mathbb{F}_{2}\mathbb{F}_{4}$-additive code of length $%
n=\alpha +\beta .$ Then, $$\left\vert C\right\vert \left\vert C^{\perp
}\right\vert \leq 2^{\alpha +2\beta }.$$
\end{corollary}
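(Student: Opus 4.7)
The plan is to transport the inequality from the mixed alphabet setting to the standard binary setting via the bijection $W$, and then use the classical duality cardinality identity for binary linear codes. This reduces everything to Lemma \ref{W(Dual-Code)} and the fact that $W$ preserves cardinalities.

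First I would observe that since $W$ is a linear bijection from $\mathbb{F}_2^\alpha\mathbb{F}_4^\beta$ onto $\mathbb{F}_2^{\alpha+2\beta}$, we have the cardinality identities $|W(C)|=|C|$ and $|W(C^\perp)|=|C^\perp|$. Next, $W(C)$ is a binary linear subspace of $\mathbb{F}_2^{\alpha+2\beta}$, so the well-known duality relation for binary linear codes yields
\begin{equation*}
|W(C)|\cdot|W(C)^\perp|=2^{\alpha+2\beta}.
\end{equation*}

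Then I would invoke Lemma \ref{W(Dual-Code)}, which gives the inclusion $W(C^\perp)\subseteq W(C)^\perp$, and hence the cardinality inequality $|W(C^\perp)|\leq|W(C)^\perp|$. Multiplying both sides by $|W(C)|=|C|$ and using the two identities above gives
\begin{equation*}
|C|\cdot|C^\perp|=|W(C)|\cdot|W(C^\perp)|\leq|W(C)|\cdot|W(C)^\perp|=2^{\alpha+2\beta},
\end{equation*}
which is the desired bound.

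There is no real obstacle here: all the work has already been done in Lemma \ref{W(Dual-Code)}, and the remaining step is simply to combine that inclusion with the binary duality identity $|D|\cdot|D^\perp|=2^N$ applied to $D=W(C)\subseteq\mathbb{F}_2^{\alpha+2\beta}$. The inequality (as opposed to equality) in the corollary is exactly what reflects the fact, illustrated by the preceding example, that $W(C^\perp)$ can be a strict subcode of $W(C)^\perp$.
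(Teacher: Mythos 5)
Your proposal is correct and follows essentially the same route as the paper's own proof: transfer cardinalities through the bijection $W$, apply Lemma \ref{W(Dual-Code)} to get $|W(C^{\perp})|\leq |W(C)^{\perp}|$, and conclude with the binary duality identity $|W(C)|\cdot|W(C)^{\perp}|=2^{\alpha+2\beta}$. No gaps.
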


\begin{proof}
Since the map $W$ is a bijection, then $\left\vert C\right\vert =\left\vert
W\left( C\right) \right\vert $ and $\left\vert C^{\perp }\right\vert
=\left\vert W\left( C^{\perp }\right) \right\vert .$ Apply Lemma \ref%
{W(Dual-Code)}, we get that $\left\vert C\right\vert \left\vert C^{\perp
}\right\vert =\left\vert W\left( C\right) \right\vert \left\vert W\left(
C^{\perp }\right) \right\vert \leq \left\vert W\left( C\right) \right\vert
\left\vert W\left( C\right) ^{\perp }\right\vert =2^{\alpha +2\beta }.$    
\end{proof}

\section{Additive complementary dual codes}
\label{sec4:per}
We proceed in this section to analyze the properties of additive complementary dual (ACD) codes, using the inner product introduced in the preceding section.

\begin{definition} An additive code $C\subseteq \mathbb{F}_{2}^{\alpha }\mathbb{F}_{4}^{\beta }$ is called an additive complementary dual (ACD) if $ C \cap  C^{\perp} = \mathbf{\{0\}}$.
\end{definition}

In the special case where $\beta = 0$, the code $C$ is an entirely binary $LCD$ code. Similarly, when $\alpha = 0$, the code $C$ becomes a purely quaternary ACD code.

The following proposition gives a characterization of a Euclidean $LCD$ code.
\begin{proposition} \cite{massey}
\label{LCD codes}
Let $G$ be a generator matrix for an $[n,k]$ linear code $C$ over a field $\mathbb{F}_{q}$. Then $C$ is a Euclidean LCD code if and only if the $k \times k$ matrix $G.  G^{t}$ is invertible (i.e., nonsingular).
\end{proposition}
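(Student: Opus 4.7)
The plan is to prove the biconditional by showing each direction via a standard linear-algebra argument that translates the orthogonality condition $c \in C^{\perp}$ into the matrix equation $cG^{t}=0$. Throughout, I will use that any codeword of $C$ can be written uniquely as $c = xG$ for some $x \in \mathbb{F}_{q}^{k}$, since $G$ is a generator matrix and therefore has full row rank $k$.

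For the forward implication, I would assume $GG^{t}$ is invertible and take an arbitrary $c \in C \cap C^{\perp}$. Writing $c = xG$ with $x \in \mathbb{F}_{q}^{k}$, the condition $c \in C^{\perp}$ is equivalent to $c$ being orthogonal to every generator of $C$, i.e.\ to every row of $G$, which is exactly $cG^{t}=0$. Substituting gives $xGG^{t}=0$, and invertibility of $GG^{t}$ forces $x=0$, hence $c=0$. So $C \cap C^{\perp} = \{0\}$ and $C$ is LCD.

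For the converse, I would assume $C$ is LCD and show $GG^{t}$ is nonsingular by proving its left null space is trivial. Suppose $xGG^{t}=0$ for some $x \in \mathbb{F}_{q}^{k}$, and let $c = xG \in C$. Then $cG^{t}=0$, so $c$ is orthogonal to every row of $G$ and therefore to every element of $C$, giving $c \in C^{\perp}$. The LCD hypothesis forces $c = 0$, i.e.\ $xG=0$, and then the full row rank of $G$ yields $x=0$. Hence $GG^{t}$ has trivial left kernel and is invertible.

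There is no serious obstacle here; the result is essentially a dictionary translation between the geometric condition $C \cap C^{\perp}=\{0\}$ and the algebraic condition that $GG^{t}$ is a $k \times k$ matrix of full rank. The only point that needs to be handled carefully is the equivalence $c \in C^{\perp} \iff cG^{t}=0$, which relies on the rows of $G$ spanning $C$, and the use of full row rank of $G$ in the converse direction to conclude $x=0$ from $xG=0$.
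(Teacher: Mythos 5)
Your proof is correct and is exactly the classical argument of Massey: the paper itself gives no proof of this proposition (it is quoted from \cite{massey}), and your translation of $c\in C^{\perp}$ into $cG^{t}=0$ together with the full-row-rank of $G$ is precisely how the original result is established. Nothing is missing.
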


We consider the following three cases. If any one of them is satisfied, then the code $C$ over $\mathbb{F}_{2}\mathbb{F}_{4}$ qualifies as an additive complementary dual (ACD) code, as established in the following theorem.

\begin{theorem}
\label{ACD-Three-Cases} Let $G$ be a generator matrix for an additive code $C$
over $\mathbb{F}_{2}\mathbb{F}_{4}$ and let $\left\{ V_{1},V_{2},\ldots
,V_{l}\right\} $ be the set of rows of $G.$ Suppose any case of the
following cases occur for all $i,j=1,2,\ldots ,l$:

Case I: $\left\langle V_{i},V_{j}\right\rangle \in \mathbb{F}_{2}$ for $%
i\neq j$ and $\left\langle V_{i},V_{i}\right\rangle \notin \mathbb{F}_{2}.$

Case II: $\left\langle V_{i},V_{j}\right\rangle \in \left\{ 0,\omega
\right\} $ for $i\neq j$ and $\left\langle V_{i},V_{i}\right\rangle \notin
\left\{ 0,\omega \right\} .$

Case III: $\left\langle V_{i},V_{j}\right\rangle \in \left\{ 0,\omega
^{2}\right\} $ for $i\neq j$ and $\left\langle V_{i},V_{i}\right\rangle
\notin \left\{ 0,\omega ^{2}\right\}.$\\
Then $C$ is an ACD code.
\end{theorem}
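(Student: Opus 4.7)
The plan is to reduce the ACD condition $C\cap C^\perp=\{0\}$ to a simple rank condition on a binary matrix derived from $GG^t$, in the spirit of Massey's criterion (Proposition~\ref{LCD codes}). Since $C$ is an $\mathbb{F}_2$-subspace with the rows $V_1,\dots,V_l$ of $G$ as an $\mathbb{F}_2$-generating set, every codeword has the form $c=\sum_{i=1}^{l}\lambda_i V_i$ with $\lambda_i\in\mathbb{F}_2$. Because $\mathbb{F}_2\subseteq\mathbb{F}_4$, the mixed inner product $\langle\cdot,\cdot\rangle_4$ is $\mathbb{F}_2$-bilinear, so the condition $c\in C^\perp$ is equivalent to the linear system $\sum_{i=1}^{l}\lambda_i\langle V_i,V_j\rangle_4=0$ in $\mathbb{F}_4$, for each $j=1,\dots,l$.

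Next, view $\mathbb{F}_4=\{a+b\omega:a,b\in\mathbb{F}_2\}$ as a two-dimensional $\mathbb{F}_2$-vector space. The key observation is that in each of the three cases, one can exhibit an $\mathbb{F}_2$-linear functional $\phi:\mathbb{F}_4\to\mathbb{F}_2$ that kills every off-diagonal entry $\langle V_i,V_j\rangle_4$ ($i\neq j$) and sends every diagonal entry $\langle V_i,V_i\rangle_4$ to $1$. For Case~I, take $\phi(a+b\omega)=b$, since $\mathbb{F}_2=\ker\phi$ and $\{\omega,\omega^2\}=\phi^{-1}(1)$. For Case~II, take $\phi(a+b\omega)=a$, since $\{0,\omega\}=\ker\phi$ and $\{1,\omega^2\}=\phi^{-1}(1)$. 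For Case~III, take the trace $\phi(a+b\omega)=a+b$, since $\{0,\omega^2\}=\ker\phi$ and $\{1,\omega\}=\phi^{-1}(1)$. In each case, the hypothesis is tailored exactly so that the diagonal values of $GG^t$ lie in $\phi^{-1}(1)$ and the off-diagonal values in $\ker\phi$.

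Applying $\phi$ component-wise to the system $\sum_i\lambda_i\langle V_i,V_j\rangle_4=0$ gives $\lambda_j\cdot 1+\sum_{i\neq j}\lambda_i\cdot 0=0$, hence $\lambda_j=0$ in $\mathbb{F}_2$ for every $j$. Therefore $c=0$, which proves $C\cap C^\perp=\{0\}$, i.e., $C$ is ACD. The only delicate point, and essentially the whole content of the argument, is the first step: because the scalars $\lambda_i$ lie in $\mathbb{F}_2\subseteq\mathbb{F}_4$, they can legitimately be pulled outside the mixed inner product even though $C$ is merely $\mathbb{F}_2$-linear and not $\mathbb{F}_4$-linear. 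Once that linearity is recorded, the three hypotheses are precisely the three ways of separating a one-dimensional $\mathbb{F}_2$-subspace of $\mathbb{F}_4$ from its complementary coset by a single linear functional, which explains why exactly these three cases appear.
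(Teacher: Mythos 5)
Your proof is correct and rests on the same mechanism as the paper's: the off-diagonal entries of $GG^{t}$ lie in a two-element additive subgroup $S$ of $\mathbb{F}_{4}$ while the diagonal entries lie in the complementary coset, so for a nonzero $\mathbb{F}_{2}$-combination the inner product with a suitable row $V_{j}$ lands outside $S$ and is therefore nonzero. Your packaging of this via the quotient functional $\phi:\mathbb{F}_{4}\to\mathbb{F}_{4}/S\cong\mathbb{F}_{2}$ treats all three cases uniformly (and explains why exactly these three cases arise), whereas the paper writes out only Case I and asserts the others are similar; otherwise the two arguments coincide.
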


\begin{proof}
We will prove the first case. The other cases are proven similarly.\\ Let $%
0\neq x=\sum\limits_{i=1}^{l}\lambda _{i}V_{i}\in C,$ where $\lambda _{i}\in 
\mathbb{F}_{2}$ for all $i=1,2,\ldots ,l.$ Then, $$\left\langle
x,V_{j}\right\rangle =\lambda _{i}\sum\limits_{\substack{ i=1  \\ i\neq j}}%
^{l}\left\langle V_{i},V_{j}\right\rangle +\lambda _{j}\left\langle
V_{j},V_{j}\right\rangle .$$

Case I: Suppose that $\left\langle V_{i},V_{j}\right\rangle \in \mathbb{F}%
_{2}$ for $i\neq j$ and $\left\langle V_{i},V_{i}\right\rangle \notin 
\mathbb{F}_{2}.$ Then $\lambda _{i}\sum\limits_{\substack{ i=1  \\ i\neq j}}%
^{l}\left\langle V_{i},V_{j}\right\rangle =0$ or $\lambda _{i}\sum\limits 
_{\substack{ i=1  \\ i\neq j}}^{l}\left\langle V_{i},V_{j}\right\rangle =1.$
Suppose that $\lambda _{i}\sum\limits_{\substack{ i=1  \\ i\neq j}}%
^{l}\left\langle V_{i},V_{j}\right\rangle =0.$ Since $x\neq 0,$ then there
is at least one $j$ such that $\lambda _{j}\neq 0$ and $\left\langle
x,V_{j}\right\rangle =\lambda _{i}\sum\limits_{\substack{ i=1  \\ i\neq j}}%
^{l}\left\langle V_{i},V_{j}\right\rangle +\lambda _{j}\left\langle
V_{j},V_{j}\right\rangle =\left\langle V_{j},V_{j}\right\rangle \neq 0.$
Thus, $x\notin C^{\perp }$ and $C$ is ACD code. If $\lambda _{i}\sum\limits 
_{\substack{ i=1  \\ i\neq j}}^{l}\left\langle V_{i},V_{j}\right\rangle =1,$
then $\left\langle x,V_{j}\right\rangle =\lambda _{i}\sum\limits_{\substack{ %
i=1  \\ i\neq j}}^{l}\left\langle V_{i},V_{j}\right\rangle +\lambda
_{j}\left\langle V_{j},V_{j}\right\rangle\in\{\omega ,\omega
^{2}\} $ and $x\notin C^{\perp }.$
\end{proof}

We now present an example of an ACD code corresponding to each of the three cases discussed above.

\begin{example}
Let $C$ be an $\mathbb{F}_{2}\mathbb{F}_{4}$-additive code of type $\left( 2 ,2 ;0,1,0 \right)$ generated by:
 $$G=\left( 
\begin{tabular}{ll|ll}
$1$ & $1$ & $\omega $ & $1$ \\ 
$0$ & $1$ & $\omega^{2} $ & $\omega $%
\end{tabular}%
\right).$$ Then $\left\langle V_{1},V_{2}\right\rangle _{4}=1
,~\left\langle V_{1},V_{1}\right\rangle _{4}=\omega $ and $\left\langle
V_{2},V_{2}\right\rangle _{4}=\omega ^{2}.$ Thus, this code satisfies case I
and is an ACD code.
\end{example}

\begin{example}
Let $C$ be an $\mathbb{F}_{2}\mathbb{F}_{4}$-additive code of type $\left( 2 ,1 ;1,0,1 \right)$, with
generator matrix:
 $$G=\left( 
\begin{tabular}{ll|l}
$1$ & $0$ & $1$ \\ 
$0$ & $1$ & $\omega $%
\end{tabular}%
\right).$$ We find that $\left\langle V_{1},V_{2}\right\rangle _{4}=\omega
,~\left\langle V_{1},V_{1}\right\rangle _{4}=\omega ^{2}$ and $\left\langle
V_{2},V_{2}\right\rangle _{4}=1,$ which corresponds to case II. Therefore, the code is an ACD code.
\end{example}

\begin{example}
Let $C$ be an $\mathbb{F}_{2}\mathbb{F}_{4}$-additive code of type $\left( 4, 2; 0, 1, 0 \right)$, with
generator matrix:
 $$G=\left( 
\begin{tabular}{llll|ll}
$1$ & $0$ & $1$ & $0$ & $1$ & $\omega $ \\ 
$0$ & $1$ & $0$ & $1$ & $\omega $ & $\omega ^{2}$%
\end{tabular}%
\right).$$ Then $\left\langle V_{1},V_{2}\right\rangle _{4}=\omega
^{2},~\left\langle V_{1},V_{1}\right\rangle _{4}=\omega $ and $\left\langle
V_{2},V_{2}\right\rangle _{4}=1.$ This code satisfies the last case and is an ACD code.
\end{example}

We provide below an example of an ACD code that does not satisfy any of the above three cases.

\begin{example}
Let $C$ be an $\mathbb{F}_{2}\mathbb{F}_{4}$-additive code of type $\left( 2 ,3 ;1,1,0 \right)$ generated by:
 $$G=\left( 
\begin{tabular}{ll|lll}
$1$ & $0$ & $0$ & $0$ & $\omega $ \\\hline

$0$ & $1$ & $1$ & $\omega $ & $0$ \\ 
$0$ & $0$ & $\omega $ & $\omega $ & $0$%
\end{tabular}%
\right).$$ Then $\left\langle V_{1},V_{2}\right\rangle _{4}=\left\langle
V_{1},V_{3}\right\rangle _{4}=0,~\left\langle V_{2},V_{3}\right\rangle
_{4}=\left\langle V_{1},V_{1}\right\rangle _{4}=1$, and $\left\langle
V_{2},V_{2}\right\rangle _{4}=\left\langle V_{3},V_{3}\right\rangle _{4}=0.$
Thus, $\left\langle V_{i},V_{i}\right\rangle _{4}\in \mathbb{F}_{2}$ for all 
$i=1,2,3$ and the code does not satisfy any of the three cases in Theorem \ref%
{ACD-Three-Cases}, even though the code is ACD code.
\end{example}

\begin{corollary}
\label{coro- three-cases}
Let $G$ be a generator matrix for an additive code $C$ over $\mathbb{F}_{2}%
\mathbb{F}_{4}$ and suppose that $GG^{t}=\left( a_{ij}\right) ,$ where the
entries $a_{ij}$ satisfy any of the cases in Theorem \ref{ACD-Three-Cases}%
. Then $C$ is ACD code.
\end{corollary}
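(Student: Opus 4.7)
The plan is to observe that this corollary is essentially a reformulation of Theorem \ref{ACD-Three-Cases} in matrix language, so the proof reduces to a single identification and an appeal to that theorem. Specifically, if $V_1,\ldots,V_l$ denote the rows of $G$, then by the computation $GG^t=\omega G_XG_X^t+G_YG_Y^t$ already carried out in Section \ref{sec3:per}, the $(i,j)$-entry of $GG^t$ is exactly
\begin{equation*}
a_{ij}=\omega\sum_{k=0}^{\alpha-1}(V_i)_k(V_j)_k+\sum_{k=0}^{\beta-1}(V_i)_{\alpha+k}(V_j)_{\alpha+k}=\langle V_i,V_j\rangle_4.
\end{equation*}
This identification is the only nontrivial ingredient of the argument.

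With this in hand, I would simply observe that the hypothesis on the entries $a_{ij}$ is, coordinate by coordinate, exactly the hypothesis of Theorem \ref{ACD-Three-Cases} on the inner products $\langle V_i,V_j\rangle_4$. Namely, in any one of the three cases, the off-diagonal entries $a_{ij}$ ($i\neq j$) lie in the prescribed subset of $\mathbb{F}_4$, while each diagonal entry $a_{ii}=\langle V_i,V_i\rangle_4$ avoids that subset.

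Having matched the hypotheses, I invoke Theorem \ref{ACD-Three-Cases} to conclude that $C\cap C^\perp=\{\mathbf 0\}$, i.e., that $C$ is an ACD code. There is no real obstacle here; the only care needed is to confirm that $GG^t$ is computed with respect to the inner product $\langle\cdot,\cdot\rangle_4$ and not a Euclidean inner product, which is already justified in Section \ref{sec3:per}.
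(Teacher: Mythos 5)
Your proof is correct and matches the paper's own argument, which simply states that the corollary is a direct application of Theorem \ref{ACD-Three-Cases}. Your explicit identification of the entries $a_{ij}$ of $GG^{t}$ with the inner products $\left\langle V_{i},V_{j}\right\rangle _{4}$ of the rows of $G$ is exactly the (implicit) content of that one-line proof, so nothing is missing.
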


\begin{proof}
The proof is a direct application of Theorem \ref{ACD-Three-Cases}.
\end{proof}

\begin{theorem}
\label{ACD-Three-Cases-Similar}Let $G$ be a generator matrix for an additive
code $C$ over $\mathbb{F}_{2}\mathbb{F}_{4}$ and let $\left\{
V_{1},V_{2},\ldots ,V_{l}\right\} $ be the set of rows of $G.$ Suppose that $%
\left\langle V_{i},V_{j}\right\rangle =0$ if $i\neq j$ and $\left\langle
V_{i},V_{i}\right\rangle \neq 0$ for all $i,j=1,2,\ldots ,l$. Then $C$ is an
ACD code.
\end{theorem}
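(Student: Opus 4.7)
The plan is to mimic the argument from Theorem \ref{ACD-Three-Cases} but without any case split. I would start from an arbitrary nonzero $x \in C$. Since $C$ is an $\mathbb{F}_2\mathbb{F}_4$-additive code, viewed as an $\mathbb{F}_2$-subspace it is spanned by $\{V_1,\dots,V_l\}$, so $x = \sum_{i=1}^l \lambda_i V_i$ with $\lambda_i \in \mathbb{F}_2$ and not all $\lambda_i$ equal to zero.

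Next, I would test $x$ against each generator by computing $\langle x, V_j\rangle_4$. The inner product is $\mathbb{F}_2$-bilinear, which is immediate from the defining formula $\langle u,v\rangle_4 = \omega\sum a_i c_i + \sum b_j d_j$: it distributes over addition in both arguments, and multiplying one argument by a scalar in $\mathbb{F}_2 = \{0,1\}$ pulls through the sum. Therefore
$$\langle x,V_j\rangle_4 = \sum_{i=1}^l \lambda_i \langle V_i,V_j\rangle_4.$$
Applying the hypothesis $\langle V_i,V_j\rangle_4 = 0$ whenever $i\neq j$, all off-diagonal terms vanish and this collapses to $\langle x,V_j\rangle_4 = \lambda_j \langle V_j,V_j\rangle_4$.

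To conclude, pick any index $j$ with $\lambda_j = 1$; such a $j$ exists since $x \neq 0$. Then $\langle x,V_j\rangle_4 = \langle V_j,V_j\rangle_4$, which is nonzero by hypothesis. Since $V_j\in C$, this witnesses that $x\notin C^\perp$. Hence $C\cap C^\perp = \{\mathbf{0}\}$, so $C$ is an ACD code.

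I do not expect any substantive obstacle here; the argument is essentially the Gram-matrix criterion of Proposition \ref{LCD codes} adapted to the $\mathbb{F}_2$-additive setting, with the diagonality of $GG^t$ (entries in $\mathbb{F}_4$) making the computation one-line. The only point that deserves care is the explicit verification that the $\mathbb{F}_4$-valued form $\langle\cdot,\cdot\rangle_4$ really is $\mathbb{F}_2$-bilinear so that the expansion of $\langle x,V_j\rangle_4$ above is justified, and this follows directly from the definition.
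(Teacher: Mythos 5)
Your proof is correct and is exactly what the paper intends: its own ``proof'' of this theorem is just the remark that the argument is similar to that of Theorem \ref{ACD-Three-Cases}, and your write-up is precisely that argument with the case analysis stripped away (expand $x$ over $\mathbb{F}_2$, use bilinearity and the vanishing of off-diagonal inner products to get $\langle x,V_j\rangle_4=\lambda_j\langle V_j,V_j\rangle_4$, and pick $j$ with $\lambda_j=1$). No gaps; if anything your version is cleaner than the paper's template.
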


\begin{proof}
The proof is similar to the proof of Theorem \ref{ACD-Three-Cases}.
\end{proof}

\begin{example}
Let $C$ be an $\mathbb{F}_{2}\mathbb{F}_{4}$-additive code of type $\left( 2 ,3 ;0,1,0 \right)$, with
generator matrix:
 $$G=\left( 
\begin{tabular}{ll|lll}
$1$ & $0$ & $\omega $ & $\omega ^{2}$ & $1$ \\ 
$0$ & $1$ & $\omega ^{2}$ & $1$ & $\omega $%
\end{tabular}%
\right) .$$ It is clear that $C$ satisfies the conditions of Theorem \ref%
{ACD-Three-Cases-Similar} and hence it is an ACD code.
\end{example}

The next Theorem provides a very important class of ACD codes based on their
generator matrices.

\begin{theorem}
\label{invertible-LCD}Let $G$ be a generator matrix for an additive code $C$
over $\mathbb{F}_{2}\mathbb{F}_{4}$ and suppose that $GG^{t}$ is an
invertible diagonal matrix. Then $C$ is an ACD code.
\end{theorem}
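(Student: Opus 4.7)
The approach is a short reduction to Theorem \ref{ACD-Three-Cases-Similar}: the two parts of the hypothesis ``$GG^{t}$ is an invertible diagonal matrix'' together amount to a compact restatement of the hypothesis of that theorem, after which the conclusion is immediate.

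In more detail, if the rows of $G$ are $V_{1},\ldots ,V_{l}$, then from the definition of the inner product on $\mathbb{F}_{2}^{\alpha }\mathbb{F}_{4}^{\beta }$ (and the identity $GG^{t}=\omega G_{X}G_{X}^{t}+G_{Y}G_{Y}^{t}$ recorded in Section \ref{sec3:per}), the $(i,j)$-entry of $GG^{t}$ is precisely $\langle V_{i},V_{j}\rangle _{4}\in \mathbb{F}_{4}$. Requiring $GG^{t}$ to be diagonal is therefore equivalent to $\langle V_{i},V_{j}\rangle _{4}=0$ for all $i\neq j$. Since $\mathbb{F}_{4}$ is a field, a diagonal matrix over $\mathbb{F}_{4}$ is invertible if and only if every one of its diagonal entries is nonzero; so the additional requirement that $GG^{t}$ be invertible translates to $\langle V_{i},V_{i}\rangle _{4}\neq 0$ for every $i$. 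Taken together, these two conditions are exactly the hypotheses of Theorem \ref{ACD-Three-Cases-Similar}, and that theorem then yields that $C$ is an ACD code.

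I do not foresee any real obstacle; the only point to state with care is the equivalence between invertibility and non-vanishing of the diagonal entries, which relies on $\mathbb{F}_{4}$ being a field rather than a proper ring. If one preferred a self-contained argument instead of invoking Theorem \ref{ACD-Three-Cases-Similar}, one could simply reproduce its one-line computation: for any nonzero $x=\sum_{i=1}^{l}\lambda _{i}V_{i}\in C$ with $\lambda _{i}\in \mathbb{F}_{2}$, choose an index $j$ with $\lambda _{j}=1$; then, using that the off-diagonal entries of $GG^{t}$ vanish,
\[
\langle x,V_{j}\rangle _{4} \;=\; \sum_{i=1}^{l}\lambda _{i}\,\langle V_{i},V_{j}\rangle _{4} \;=\; \lambda _{j}\,\langle V_{j},V_{j}\rangle _{4} \;=\; \langle V_{j},V_{j}\rangle _{4} \;\neq\; 0,
\]
so $x\notin C^{\perp }$ and hence $C\cap C^{\perp }=\{\mathbf{0}\}$, as required.
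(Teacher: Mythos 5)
Your proposal is correct and follows essentially the same route as the paper: both reduce the statement to Theorem \ref{ACD-Three-Cases-Similar} by observing that diagonality of $GG^{t}$ gives $\langle V_{i},V_{j}\rangle_{4}=0$ for $i\neq j$, while invertibility of a diagonal matrix over the field $\mathbb{F}_{4}$ forces every diagonal entry $\langle V_{i},V_{i}\rangle_{4}$ to be nonzero. The self-contained computation you append is just the proof of that cited theorem unwound, so nothing substantively new is introduced.
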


\begin{proof}
Let $\left\{ V_{1},V_{2},\ldots ,V_{l}\right\} $ be the set of rows of $G.$
Since $GG^{t}$ is a diagonal matrix, then $\left\langle
V_{i},V_{j}\right\rangle =0$ if $i\neq j$ and $ det( GG^{t})
=\prod\limits_{i=1}^{l}\left\langle V_{i},V_{i}\right\rangle \neq
0\Leftrightarrow \left\langle V_{i},V_{i}\right\rangle \neq 0,$ for all $%
i=1,2,...,l.$ Therefore, by Theorem \ref{ACD-Three-Cases-Similar}, $C$ is an
ACD code.
\end{proof}

The next example is an application of Theorem \ref{invertible-LCD}.

\begin{example}
Let $C$ be an $\mathbb{F}_{2}\mathbb{F}_{4}$-additive code of type $\left( 2 ,1 ;1,0,1 \right)$ generated by: 
$$G=\left( 
\begin{tabular}{ll|l}
$1$ & $1$ & $1$ \\ 
$0$ & $1$ & $\omega $%
\end{tabular}%
\right) .$$ Then $GG^{t}=\left( 
\begin{tabular}{ll}
$1$ & $0$ \\ 
$0$ & $1$%
\end{tabular}%
\right) $ satisfies the conditions in Theorem \ref{invertible-LCD} and $C$
is an ACD code.
\end{example}

\section{Complementary duality of $C,~C_{X}$ and $C_{Y}$} \label{sec5:per}

In this section, we examine the relationship between an ACD code $C$ over $%
\mathbb{F}_{2}\mathbb{F}_{4}$ and its punctured codes $C_{X}$ and $C_{Y}$. In particular, we explore the conditions under which the code $C$ is an additive complementary dual (ACD), assuming that only one of its punctured codes, $C_X$ or $ C_Y$, satisfies the LCD or ACD property, respectively. Furthermore, we present a counterexample to show that the ACD property is not necessarily preserved, even when $C_X$ is an LCD code and $C_Y$ is an ACD code.

\begin{theorem}
\label{th23: span}
Let $C$ be an $\F_2 \F_4$-additive code such that $C_X$ is a binary LCD code and $C_Y$ is an $\F_4$-additive self-orthogonal code. If the binary part $a$ of every nonzero codeword $u = (a , b)\in C $ is nonzero, then  $C$ is an ACD code.
\end{theorem}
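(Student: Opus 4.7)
The plan is to argue by contradiction: I would assume there exists a nonzero $u = (a \mid b) \in C \cap C^{\perp}$ and derive $a = 0$, which contradicts the standing hypothesis that the binary part of every nonzero codeword is nonzero. Concretely, I aim to show that $a$ must simultaneously belong to $C_X$ and $C_X^{\perp}$, so that the LCD property of $C_X$ forces $a = 0$.

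The first step is to exploit the self-orthogonality of $C_Y$. For an arbitrary $v = (c \mid d) \in C$, the orthogonality condition $\langle u, v \rangle_4 = 0$ reads
\[
\omega \sum_{i=0}^{\alpha-1} a_i c_i + \sum_{j=0}^{\beta-1} b_j d_j = 0.
\]
Because $b$ and $d$ are both $Y$-projections of codewords of $C$, they lie in $C_Y$, and the self-orthogonality of $C_Y$ kills the $\F_4$-valued summand $\sum_j b_j d_j$. The relation therefore collapses to $\omega \sum_i a_i c_i = 0$, and since $\sum_i a_i c_i$ is a binary scalar while $\omega \neq 0$ in $\F_4$, we obtain $\sum_i a_i c_i = 0$ in $\F_2$ for every $c \in C_X$. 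This is precisely the statement $a \in C_X^{\perp}$. Combined with $a \in C_X$ (which holds because $a$ is the $X$-projection of $u \in C$), the LCD hypothesis on $C_X$ delivers $a = 0$, contradicting $a \neq 0$.

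I do not anticipate a significant obstacle. The only delicate point is making the reduction from the $\F_4$-equation $\omega A = 0$ to the $\F_2$-equation $A = 0$ rigorous, which amounts to noting that $\{1, \omega\}$ is an $\F_2$-basis of $\F_4$, so an $\F_2$-valued coefficient of $\omega$ vanishes if and only if the whole $\F_4$-element does. Once that is handled, the three hypotheses (nonzero binary parts, $C_X$ LCD, $C_Y$ self-orthogonal) plug directly into the three roles needed to close the contradiction.
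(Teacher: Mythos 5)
Your proposal is correct and follows essentially the same argument as the paper: assume a nonzero $u=(a\mid b)\in C\cap C^{\perp}$, use the self-orthogonality of $C_Y$ to reduce $\langle u,v\rangle_4=0$ to $\omega\sum_i a_i c_i=0$, conclude $a\in C_X\cap C_X^{\perp}=\{0\}$, and contradict the hypothesis that the binary part of every nonzero codeword is nonzero. No discrepancies to report.
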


\begin{proof} Assume that the code $C$ is not an ACD code. Therefore, there exists a nonzero codeword $u \in C \cap C^{\perp}$. Let $u = (a , b)\in C $, where $ a \in C_X$ and $ b \in C_Y$. Since $u \in C^{\perp}$, it follows that $\left\langle u,v\right\rangle _{4} =  \omega \sum\limits_{i=0}^{\alpha
-1}a_{i}c_{i}+\sum\limits_{j=0}^{\beta -1}b_{j}d_{j} = 0$ for all $v = (c , d)\in C$. Since $C_Y$ is an $\F_4$-additive  self-orthogonal code, it follows that $\left\langle u,v\right\rangle _{4} = \omega \sum\limits_{i=0}^{\alpha
-1}a_{i}c_{i} = 0$. This implies that $\sum\limits_{i=0}^{\alpha
-1}a_{i}c_{i} = 0$. Thus, $\left\langle a,c\right\rangle _{2} = 0$ for all $ c \in C_X $. Hence, $a \in C_X^\perp$. Since the binary part $a$ is nonzero, this contradicts the condition that $ C_X $ is an LCD code. Therefore, we conclude that $C$ is an ACD code.
\end{proof}

\begin{example} \label{exa11: span}
Let $C$ be an $\mathbb{F}_{2}\mathbb{F}_{4}$-additive code of type $\left( 2 ,2 ;1,0,1 \right)$ generated by:
$$
G=\left(\begin{matrix}
 1 &  1 &   \bigm| & 1 & 1 \\
   0 & 1 &  \bigm| & \omega & \omega
\end{matrix}\right)
.$$
We have $C_X= \{ (0, 0), (1, 1),(0,1), (1,0 )\}$, and $C_Y= \{ ( 0, 0), ( 1, 1),( \omega, \omega), ( {\omega}^2, {\omega}^2)\}$. We observe that the codes $C_X$ and $C_Y$ satisfy the conditions stated in Theorem \ref{th23: span},  and that the binary part of every nonzero codeword in $C$ is nonzero. Consequently, $C$ is an ACD code.
\end{example}

As a direct consequence of Theorem \ref{th23: span}, we obtain the following corollary.

\begin{corollary}
\label{cor25: span}
Let $C$ be an $\F_2 \F_4$-additive code. If $C_X$ is not a binary LCD code and $C_Y$ is an $\F_4$-additive self-orthogonal code. Then $C$ is not an ACD code.
\end{corollary}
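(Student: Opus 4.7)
The plan is to argue by direct construction: starting from the failure of $C_X$ to be LCD, exhibit an explicit nonzero element of $C \cap C^{\perp}$, which rules out $C$ being ACD.

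First, I would invoke the hypothesis that $C_X$ is not a binary LCD code: by Proposition~\ref{LCD codes} (or the definition of LCD), this means $C_X \cap C_X^{\perp}$ contains a nonzero vector $a$. By the definition of the punctured code $C_X$ as the projection of $C$ onto the first $\alpha$ coordinates, there exists some $b \in \F_4^{\beta}$, automatically lying in $C_Y$, such that $u = (a \mid b) \in C$.

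Next, I would verify $u \in C^{\perp}$. For an arbitrary $v = (c \mid d) \in C$, with $c \in C_X$ and $d \in C_Y$, the inner product from Section~\ref{sec3:per} reads
\[
\langle u, v\rangle_4 = \omega \sum_{i=0}^{\alpha-1} a_i c_i + \sum_{j=0}^{\beta-1} b_j d_j.
\]
The first sum vanishes because $a \in C_X^{\perp}$ and $c \in C_X$; the second sum vanishes because $C_Y$ is $\F_4$-additive self-orthogonal and both $b, d \in C_Y$. Hence $\langle u, v\rangle_4 = 0$ for every $v \in C$, so $u \in C^{\perp}$. Since $a \neq 0$, the codeword $u$ is itself nonzero, so $u \in (C \cap C^{\perp}) \setminus \{\mathbf{0}\}$, proving that $C$ is not ACD.

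The argument is essentially routine and mirrors the proof of Theorem~\ref{th23: span} run in reverse: there, one assumes a nonzero element of $C \cap C^{\perp}$ and derives that its binary part lies in $C_X \cap C_X^{\perp}$, contradicting LCD; here, one starts with a witness $a$ of the failure of LCD and lifts it back into $C$. The only point of mild care is the lifting step — picking some $b$ so that $(a \mid b) \in C$ without worrying about uniqueness — which is immediate from the definition of $C_X$ as a projection. No further machinery or additional hypothesis beyond the two stated (not LCD on the binary part, self-orthogonal on the quaternary part) is required.
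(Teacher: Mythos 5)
Your proof is correct, and it is actually more complete than what the paper provides. The paper simply declares the corollary to be ``an immediate consequence of Theorem~\ref{th23: span},'' but strictly speaking the corollary is \emph{not} the contrapositive of that theorem: the contrapositive of ``($C_X$ LCD) and ($C_Y$ self-orthogonal) and (nonzero binary parts) $\Rightarrow$ ACD'' only yields that a non-ACD code fails \emph{at least one} of the three hypotheses, whereas the corollary asserts the converse implication ``($C_X$ not LCD) and ($C_Y$ self-orthogonal) $\Rightarrow$ not ACD.'' What actually justifies the corollary is exactly the direct construction you give: take a nonzero $a\in C_X\cap C_X^{\perp}$, lift it to $u=(a\mid b)\in C$, and check $u\in C^{\perp}$ using $a\in C_X^{\perp}$ for the $\omega$-part and self-orthogonality of $C_Y$ for the quaternary part. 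This is the reverse direction of the computation inside the proof of Theorem~\ref{th23: span} (which only shows $u\in C^{\perp}\Rightarrow a\in C_X^{\perp}$), and your writing it out explicitly supplies the step the paper leaves implicit. Both approaches rest on the same inner-product decomposition; yours buys logical completeness at the cost of a few extra lines.
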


\begin{proof} The proof is an immediate consequence of  Theorem \ref{th23: span}.
\end{proof}

\begin{example}
Let $C$ be an $\mathbb{F}_{2}\mathbb{F}_{4}$-additive code of type $\left( 3 ,2 ;1,0,1 \right)$ generated by:
$$
G=\left(\begin{matrix}
 1 & 0 &  1 &   \bigm| & 1 & 1 \\
   0 & 1 & 0 &  \bigm| & \omega & \omega
\end{matrix}\right)
.$$
In this example, we observe that $C_X$ is not a binary LCD code and $C_Y$ is an $\F_4$-additive self-orthogonal code. As a consequence of Corollary \ref{cor25: span}, the code $C$ is not an ACD code.
\end{example}

\begin{proposition}
\label{th28: span}
Let $C$ be an $\mathbb{F}_{2}\mathbb{F}_{4}$-additive
code with generator matrix $G=\left(G_{X}|G_{Y}\right) .$ If $G_X G_X^{t} = I_{\alpha}$, and $G_{Y}$ generates an $\mathbb{F}_{4}$-additive self-orthogonal code, then $C$ is an
ACD code.
\end{proposition}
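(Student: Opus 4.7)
The plan is to reduce the statement directly to Theorem~\ref{invertible-LCD} by exhibiting $GG^t$ as an invertible diagonal matrix over $\mathbb{F}_{4}$.

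First, I would invoke the block identity already observed in Section~\ref{sec3:per}, namely
\[
GG^{t} = \omega\, G_{X} G_{X}^{t} + G_{Y} G_{Y}^{t},
\]
which is valid for any generator matrix of the form $G=(G_{X}\mid G_{Y})$ under the inner product $\langle\cdot,\cdot\rangle_{4}$. The hypothesis $G_{X} G_{X}^{t}=I$ (the identity matrix of the appropriate size) immediately gives $\omega\, G_{X} G_{X}^{t}=\omega I$, a diagonal matrix all of whose diagonal entries equal $\omega$.

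Next, I would translate the self-orthogonality of the $\mathbb{F}_{4}$-additive code generated by $G_{Y}$ into the matrix equation $G_{Y}G_{Y}^{t}=0$. Self-orthogonality means $\langle u,v\rangle=0$ for every pair of codewords of $C_{Y}$, and in particular for every pair of rows of $G_{Y}$ (including each row paired with itself), so every entry of $G_{Y}G_{Y}^{t}$ vanishes. Combining the two contributions yields $GG^{t}=\omega I$, which is a diagonal matrix with nonzero determinant $\omega^{l}\in\mathbb{F}_{4}^{*}$, hence invertible. Theorem~\ref{invertible-LCD} then concludes that $C$ is an ACD code.

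The argument is essentially formal, so no serious obstacle is expected. The only mildly delicate point is the equivalence between $C_{Y}$ being self-orthogonal and the matrix equality $G_{Y}G_{Y}^{t}=0$; this holds because $\langle\cdot,\cdot\rangle$ is $\mathbb{F}_{2}$-bilinear and $C_{Y}$ is generated as an $\mathbb{F}_{2}$-module by the rows of $G_{Y}$, so pairwise orthogonality of generators propagates to pairwise orthogonality of all codewords, while the converse is immediate. Beyond this routine check, the proof is a one-line consequence of the formula for $GG^{t}$ and Theorem~\ref{invertible-LCD}.
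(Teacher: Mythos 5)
Your proposal is correct and follows exactly the paper's own argument: compute $GG^{t}=\omega G_{X}G_{X}^{t}+G_{Y}G_{Y}^{t}$, kill the second term using self-orthogonality of $C_{Y}$, obtain $GG^{t}=\omega I_{\alpha}$, and invoke Theorem~\ref{invertible-LCD}. The extra remark justifying $G_{Y}G_{Y}^{t}=0$ via $\mathbb{F}_{2}$-bilinearity is a harmless elaboration of a step the paper leaves implicit.
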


\begin{proof} We have $G{G}^{t} = \omega {G_X G_X^{t}}  + G_Y G_Y^{t} = \omega {G_X G_X^{t}}$, and since $G_X G_X^{t} = I_{\alpha}$, we get $G{G}^{t} = \omega I_{\alpha}$.
Therefore, according to Theorem \ref{invertible-LCD}, we conclude that $C$ is an ACD code.
\end{proof}

\begin{corollary}
\label{prop 29} Let $C$ be an $\mathbb{F}_{2}\mathbb{F}_{4}$-additive
code with generator matrix $G=\left( I_{\alpha}|G_{Y}\right) ,$ where $G_{Y}$
generates an $\mathbb{F}_{4}$-additive self-orthogonal code. Then $C$ is an
ACD code.
\end{corollary}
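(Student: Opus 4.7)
The plan is to derive this as an immediate specialization of Proposition \ref{th28: span}. The hypothesis $G=\left(I_{\alpha}|G_{Y}\right)$ means that the binary part of the generator matrix is $G_{X}=I_{\alpha}$, and the quaternary part is $G_{Y}$, which by assumption generates an $\mathbb{F}_{4}$-additive self-orthogonal code.

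The only thing to check is the first hypothesis of Proposition \ref{th28: span}, namely $G_{X}G_{X}^{t}=I_{\alpha}$. But with $G_{X}=I_{\alpha}$, this reduces to $I_{\alpha}I_{\alpha}^{t}=I_{\alpha}$, which is trivial. Combined with the self-orthogonality of the code generated by $G_{Y}$, Proposition \ref{th28: span} applies directly and yields that $C$ is an ACD code.

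There is no real obstacle here, since the result is a one-line specialization. If a self-contained argument were preferred, I would instead unfold the proof: compute $G G^{t}=\omega\, G_{X}G_{X}^{t}+G_{Y}G_{Y}^{t}=\omega I_{\alpha}+0=\omega I_{\alpha}$, which is an invertible diagonal matrix over $\mathbb{F}_{4}$, so Theorem \ref{invertible-LCD} applies and forces $C$ to be ACD.
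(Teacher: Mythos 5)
Your proposal is correct: since $G_X = I_\alpha$ gives $G_X G_X^{t} = I_\alpha$ trivially, Proposition \ref{th28: span} applies at once, and your fallback computation $GG^{t} = \omega I_\alpha$ with Theorem \ref{invertible-LCD} is exactly how that proposition is proved. The paper's own proof of the corollary instead recomputes the row inner products $\left\langle V_i, V_j\right\rangle$ directly and invokes Theorem \ref{ACD-Three-Cases-Similar}, but this is the same computation in different packaging, so your argument is essentially the same approach (and arguably the cleaner one, given that the statement is labelled a corollary of Proposition \ref{th28: span}).
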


\begin{proof}
Suppose that $C$ is an $\mathbb{F}_{2}\mathbb{F}_{4}$-additive code with
generator matrix $G=\left( I_{\alpha}|G_{Y}\right) ,$ where $G_{Y}$ generates an $
\mathbb{F}_{4}$-additive self-orthogonal code. Then the rows of $G$ are of
the form $\left\{ V_{1},V_{2},\ldots ,V_{\alpha}\right\} ,$ where $V_{i}=\left(
a_{i}|b_{i}\right) $ and $a_{i}=\left( 0,0,\ldots ,1,0,\ldots ,0\right),$ in
which $1$ is in the ith position. Thus, 
$
\left\langle V_i, V_j \right\rangle =
\begin{cases}
\omega & \text{if } i = j, \\
0 & \text{if } i \ne j.
\end{cases}
$\\
By Theorem \ref{ACD-Three-Cases-Similar}, we get that $C$ is an
ACD code.
\end{proof}

\begin{example} 
\label{exa11: span}
Let $C$ be an $\mathbb{F}_{2}\mathbb{F}_{4}$-additive code of type $\left( 2 ,2 ;1,0,1 \right)$ generated by:
$$
G=\left(\begin{matrix}
 1 &  0 &   \bigm| & 1 & 1 \\
   0 & 1 &  \bigm| & \omega & \omega
\end{matrix}\right)
.$$
 We observe that the generator matrices $G_X$ and $G_Y$ satisfy the conditions stated in Proposition \ref {prop 29}. Consequently, $C$ is an ACD code.
\end{example}

In Theorem \ref{th23: span}, we assumed that $C_X$ is a binary LCD code and $C_Y$ is an $\F_4$-additive self-orthogonal code. Now, we change the situation. This time, $C_X$ is a binary self-orthogonal code, and $C_Y$ is an ACD code over $\F_4$. We show that, under these conditions together with one additional condition, the code $C$ is still an ACD code. This is the result of the next theorem.

\begin{theorem}
\label{th27: span}
Let $C$ be an $\F_2 \F_4$-additive code such that $C_X$ is a binary self-orthogonal code and $C_Y$ is an ACD code over $\F_4$. If the quaternary part $b$ of every nonzero codeword $u = (a , b)\in C $ is nonzero, then $C$ is an ACD code.
\end{theorem}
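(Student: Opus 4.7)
The plan is to mirror the contradiction argument used in Theorem~\ref{th23: span}, but with the roles of the binary and quaternary parts swapped. Specifically, I will assume $C$ is not an ACD code and derive a contradiction with the ACD property of $C_Y$.

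First, I would suppose for contradiction that there exists a nonzero $u = (a, b) \in C \cap C^{\perp}$, where $a \in C_X$ and $b \in C_Y$. Then for every $v = (c, d) \in C$ (with $c \in C_X$ and $d \in C_Y$), the duality condition forces the inner product to vanish:
\[
\langle u, v \rangle_4 = \omega \sum_{i=0}^{\alpha-1} a_i c_i + \sum_{j=0}^{\beta-1} b_j d_j = 0.
\]

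Next, I would use the hypothesis that $C_X$ is a binary self-orthogonal code to conclude that $\sum_{i=0}^{\alpha-1} a_i c_i = 0$ in $\F_2$ for every $c \in C_X$. This kills the $\omega$-term in the expression above, leaving $\sum_{j=0}^{\beta-1} b_j d_j = 0$ for every $d \in C_Y$. Hence $b \in C_Y^{\perp}$. Combined with $b \in C_Y$, this yields $b \in C_Y \cap C_Y^{\perp}$, and the hypothesis that the quaternary part of every nonzero codeword of $C$ is nonzero forces $b \neq 0$. This contradicts the assumption that $C_Y$ is an ACD code over $\F_4$, so $C$ must be ACD.

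I do not anticipate a genuine obstacle, since the argument is the exact symmetric analog of Theorem~\ref{th23: span}; the only subtle point worth checking is that as $v$ ranges over $C$, its projections $c$ and $d$ range over all of $C_X$ and $C_Y$ respectively, which is immediate from the definition of the punctured codes and guarantees that the two derived vanishing conditions hold universally over $C_X$ and $C_Y$.
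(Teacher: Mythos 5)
Your proof is correct and is exactly the argument the paper intends: its own proof of this theorem simply states that it follows the same structure as Theorem~\ref{th23: span}, with the roles of the binary and quaternary parts swapped, which is precisely what you carried out. The one subtle point you flag (that the projections of $v$ sweep out all of $C_X$ and $C_Y$) is indeed the only thing to check, and it holds by definition of the punctured codes.
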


\begin{proof} 
The argument follows the same structure as that presented in Theorem \ref{th23: span}
\end{proof}

\begin{example}
Let $C$ be an $\mathbb{F}_{2}\mathbb{F}_{4}$-additive code of type $\left( 4 ,1 ;1,0,1 \right)$ generated by:
$$
G=\left(\begin{matrix}
1 &   1 & 1 &   1 &   \bigm| & 1  \\
  1 &   1 & 0 &  0 &  \bigm| & \omega
\end{matrix}\right)
.$$
In this case $C_X$ is a binary self-orthogonal code, $C_Y$ is an ACD code, and the quaternary part of every nonzero codeword in $ C $ is nonzero. By Theorem \ref{th27: span}, we conclude that $C$ is an ACD code.
\end{example}

We now consider the case where $C_Y$ is not an ACD code over $\F_4$. This leads to the following corollary, which is easy to verify.

\begin{corollary} 
\label{cor33: span}
Let $C$ be an $\F_2 \F_4$-additive code such that $C_X$ is a binary self-orthogonal code and $C_Y$ is not an ACD code over $\F_4$. Then $C$ is not an ACD code. 
\end{corollary}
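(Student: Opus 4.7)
The plan is to argue by contradiction in the contrapositive direction: assume $C_Y$ is not an ACD code over $\F_4$ and exhibit an explicit nonzero element of $C \cap C^{\perp}$. Because $C_Y$ fails the ACD property, there exists a nonzero $b \in C_Y \cap C_Y^{\perp}$. By definition of the puncturing, there is some $a \in \F_2^{\alpha}$ such that $u = (a \mid b) \in C$. Note $u \neq 0$ since $b \neq 0$, regardless of whether $a$ is zero or not.

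The next step is to check that this $u$ lies in $C^{\perp}$. Take any $v = (c \mid d) \in C$. Then $c \in C_X$ and $d \in C_Y$, so
\begin{equation*}
\langle u, v \rangle_4 \;=\; \omega \sum_{i=0}^{\alpha-1} a_i c_i \;+\; \sum_{j=0}^{\beta-1} b_j d_j.
\end{equation*}
Since $C_X$ is binary self-orthogonal and $a, c \in C_X$, the first sum vanishes in $\F_2$. Since $b \in C_Y^{\perp}$ and $d \in C_Y$, the second sum vanishes in $\F_4$. Hence $\langle u, v \rangle_4 = 0$ for every $v \in C$, so $u \in C^{\perp}$, which contradicts the ACD hypothesis on $C$.

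The argument is structurally parallel to the proof of Theorem \ref{th23: span} (with the roles of the two components swapped), so no new machinery is needed; the key observation is merely that the binary part $a$ can be arbitrary because $C_X$ being self-orthogonal automatically kills the $\omega$-component of the inner product. The only point that requires any care is ensuring nontriviality of the witness, and this is handled simply by picking $b \neq 0$ in $C_Y \cap C_Y^{\perp}$, which forces $(a \mid b) \neq 0$ no matter what $a$ is. Thus there is no real obstacle, and the corollary follows directly as the contrapositive of a one-line computation.
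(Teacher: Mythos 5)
Your proof is correct: choosing a nonzero $b\in C_Y\cap C_Y^{\perp}$, lifting it to some $(a\mid b)\in C$, and using self-orthogonality of $C_X$ to kill the $\omega$-component is exactly the direct verification the paper leaves to the reader (it states the corollary without proof, calling it ``easy to verify''). This mirrors the argument of Corollary \ref{cor25: span} with the roles of the two components swapped, so there is nothing to add.
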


\begin{example}
Let $C$ be an $\mathbb{F}_{2}\mathbb{F}_{4}$-additive code of type $\left( 4 ,3 ;0,1,0 \right)$ generated by:
$$
G=\left(\begin{matrix}
 1 &  1 &  1 & 1 &  \bigm| & 1 & \omega & \omega^2 \\\
   0 & 1 & 0 &  1 &  \bigm| & \omega & \omega^2 & 1
\end{matrix}\right)
.$$
 Is it clear that $C_X$ is a binary self-orthogonal code. However, since $( 1, \omega, \omega^2)\in C_Y^{\perp}$, we deduce that $C_Y$ is not an ACD code. Consequently, it follows from Corollary \ref{cor33: span} that $C$ is not an ACD code.
\end{example}

Since every self-dual code is self-orthogonal, Theorem \ref{th27: span} remains valid when $C_X$ is assumed to be self-dual code. This observation leads to the following corollary.

\begin{corollary}
\label{cor35: span}
Let $C$ be an $\F_2 \F_4$-additive code such that $C_X$ is a binary self-dual code and $C_Y$ is an ACD code over $\F_4$. If the quaternary part $b$ of every nonzero codeword $u = (a , b)\in C $ is nonzero, then $C$ is an ACD code. 
\end{corollary}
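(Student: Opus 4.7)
The plan is to obtain this result as an essentially immediate corollary of Theorem \ref{th27: span}, exploiting the observation already flagged in the paragraph preceding the statement: self-duality is strictly stronger than self-orthogonality. First I would recall that a binary linear code $C_X$ is self-dual when $C_X = C_X^{\perp}$, which in particular forces the inclusion $C_X \subseteq C_X^{\perp}$, i.e., $C_X$ is self-orthogonal with respect to the Euclidean inner product on $\mathbb{F}_2^{\alpha}$.

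Given this, the hypotheses of Corollary \ref{cor35: span} refine, rather than weaken, those of Theorem \ref{th27: span}: the premise that $C_X$ is self-dual implies the premise that $C_X$ is self-orthogonal, while the remaining conditions (that $C_Y$ is an ACD code over $\mathbb{F}_4$ and that the quaternary part $b$ of every nonzero codeword $u = (a,b)\in C$ is nonzero) are carried over verbatim. Therefore I would conclude by directly invoking Theorem \ref{th27: span} on the code $C$ to deduce that $C \cap C^{\perp} = \{\mathbf{0}\}$, which is precisely the definition of an ACD code.

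There is no genuine obstacle here, since the argument reduces to a one-line observation followed by citation of the previously proved theorem; the main content of the corollary is conceptual, namely flagging the self-dual case as a useful specialization that is likely to be applied in constructions (for instance, when leveraging the rich existing families of binary self-dual codes in the $C_X$ component). If desired, the proof can be stated as a single sentence: since every binary self-dual code is self-orthogonal, the result follows from Theorem \ref{th27: span}.
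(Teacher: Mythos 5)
Your proposal is correct and matches the paper exactly: the paper justifies this corollary with the same one-line observation that every binary self-dual code is self-orthogonal, so Theorem \ref{th27: span} applies verbatim. Nothing is missing.
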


\begin{example}
Let $C$ be an $\mathbb{F}_{2}\mathbb{F}_{4}$-additive code of type $\left( 4 ,2 ;0,1,0 \right)$, with
generator matrix:
$$
G=\left(\begin{matrix}
 1 &  0 &  1 & 0 &  \bigm| & \omega & \omega^2 \\\
   0 & 1 & 0 &  1 &  \bigm| & \omega^2 & 1
\end{matrix}\right)
.$$
In this example, $C_X = C_X^{\perp}$, $C_Y$ is an ACD code over $\F_4$, and the quaternary part of every nonzero codeword in $ C $ is nonzero. So, from Corollary \ref{cor35: span}, we deduce that $C$ is an ACD code.
\end{example}

\begin{proposition}
\label{th37: span}
Let $C$ be an $\mathbb{F}_{2}\mathbb{F}_{4}$-additive
code with generator matrix $G=\left(G_{X}|G_{Y}\right),$ where $G_{X}$ generates a binary self-orthogonal code and $G_Y G_Y^{t}$ is an invertible diagonal matrix. Then $C$ is an
ACD code.
\end{proposition}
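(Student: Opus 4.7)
The plan is to reduce this directly to Theorem \ref{invertible-LCD} via a short computation of $GG^t$. Recall the formula derived in Section \ref{sec3:per}:
\begin{equation*}
GG^t = \omega\, G_X G_X^t + G_Y G_Y^t.
\end{equation*}

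First, I would use the hypothesis that $G_X$ generates a binary self-orthogonal code. By definition, this means that every pair of rows of $G_X$ (including a row with itself) has Euclidean inner product $0$ in $\mathbb{F}_2$, which translates to the matrix identity $G_X G_X^t = 0$. Substituting into the expression above gives $GG^t = G_Y G_Y^t$. By assumption, $G_Y G_Y^t$ is an invertible diagonal matrix, so $GG^t$ is itself an invertible diagonal matrix over $\mathbb{F}_4$.

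At this point, I would invoke Theorem \ref{invertible-LCD}: a generator matrix $G$ of an $\mathbb{F}_2\mathbb{F}_4$-additive code whose product $GG^t$ is an invertible diagonal matrix yields an ACD code. Applying this to our $G$ immediately gives the conclusion.

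I do not anticipate any obstacle; the only subtle point is confirming that ``binary self-orthogonal'' indeed forces $G_X G_X^t$ to vanish as a matrix (not merely row-by-row against codewords), which is standard since the rows of $G_X$ span the code and the $\mathbb{F}_2$-bilinear form is linear in both arguments. After that, the result is an immediate corollary of Theorem \ref{invertible-LCD}.
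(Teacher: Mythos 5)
Your proof is correct and matches the paper's intended argument: the paper simply says the proof "follows the same structure as Proposition \ref{th28: span}," which is exactly the computation you carry out ($G_XG_X^t=0$ from self-orthogonality, so $GG^t=G_YG_Y^t$ is an invertible diagonal matrix, and Theorem \ref{invertible-LCD} applies). No gaps; your remark that self-orthogonality of the generated code forces $G_XG_X^t=0$ as a matrix is the right point to check and is handled correctly.
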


\begin{proof} 
The argument follows the same structure as that presented in Proposition \ref{th28: span}.
\end{proof}

\begin{example}
Let $C$ be an $\mathbb{F}_{2}\mathbb{F}_{4}$-additive code of type $\left( 2 ,3 ;1,1,0 \right)$, with
generator matrix:$$G=\left( 
\begin{tabular}{ll|lll}
$1$ & $1$ & $0$ & $0$ & $\omega $ \\\hline

$0$ & $0$ & $1$ & $\omega $ & $0$ \\ 
$0$ & $0$ & $\omega $ & $1 $ & $0$%
\end{tabular}%
\right).$$ Is it clear that $G_{X}$ generates a binary self-orthogonal code. On the other hand, we have $G_Y G_Y^{t}=\left( 
\begin{tabular}{lll}
$\omega^2$ & $0$ & $0$ \\ 
$0$ & $\omega$ & $0$  \\ 
$0$ & $0$ & $\omega $ %
\end{tabular}%
\right),$ with $det(G_Y G_Y^{t})=\omega$. Thus, $G_Y G_Y^{t}$ is an invertible diagonal matrix. Consequently, from Proposition \ref{th37: span}, $C$ is an ACD code.
\end{example}

\begin{proposition}
\label{prop38: span}
Let $C$ be an $\F_2 \F_4$-additive code with generator matrix $G = \left(G_X |G_Y \right)$, where $G_X$ generates a binary self-orthogonal code and $ G_Y =  \left(^{I_{\beta}}_ {\omega I_{\beta}}\right)$. Then $C$ is an $ACD$ code.
\end{proposition}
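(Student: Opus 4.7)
Following the strategy of Propositions~\ref{th28: span} and~\ref{th37: span}, I would first compute
\[
GG^{t} = \omega\, G_X G_X^{t} + G_Y G_Y^{t}.
\]
Since $G_X$ generates a binary self-orthogonal code, $G_X G_X^{t} = 0$ over $\F_2$, so the first term vanishes. A direct block multiplication using the stated form of $G_Y$ then yields
\[
GG^{t} = G_Y G_Y^{t} = \begin{pmatrix} I_\beta & \omega I_\beta \\ \omega I_\beta & \omega^{2} I_\beta \end{pmatrix}.
\]

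The main obstacle is that, unlike in the earlier propositions, this matrix is singular over $\F_4$: its bottom block-row equals $\omega$ times its top block-row. Hence Theorem~\ref{invertible-LCD} cannot be invoked verbatim. The resolution is specific to the $\F_2\F_4$-additive setting: codewords of $C$ are $\F_2$-linear (not $\F_4$-linear) combinations of the rows of $G$. Therefore, to prove $C\cap C^{\perp} = \{\mathbf{0}\}$, it suffices to show that the map $\lambda \mapsto \lambda\, GG^{t}$ has trivial kernel when the domain is restricted to $\F_{2}^{2\beta}$; indeed, a codeword $x = \lambda G$ with $\lambda \in \F_{2}^{2\beta}$ lies in $C^{\perp}$ if and only if $\lambda\, GG^{t} = 0$.

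To carry this out, I would split $\lambda = (\lambda^{(1)} \mid \lambda^{(2)})$ with $\lambda^{(1)},\lambda^{(2)} \in \F_{2}^{\beta}$ and compute
\[
\lambda\, GG^{t} = \bigl(\, \lambda^{(1)} + \omega\,\lambda^{(2)} \;\bigm|\; \omega\bigl(\lambda^{(1)} + \omega\,\lambda^{(2)}\bigr)\,\bigr).
\]
The condition $\lambda\, GG^{t} = 0$ thus reduces to the single $\F_4$-valued vector equation $\lambda^{(1)} + \omega\,\lambda^{(2)} = 0$, and coordinatewise this is $\lambda^{(1)}_{i} + \omega\,\lambda^{(2)}_{i} = 0$ in $\F_4$. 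Since $\{1,\omega\}$ is an $\F_2$-basis of $\F_4$ and both coefficients lie in $\F_2$, we conclude $\lambda^{(1)}_{i} = \lambda^{(2)}_{i} = 0$ for every $i$. Hence $\lambda = 0$, so $C \cap C^{\perp} = \{\mathbf{0}\}$, and $C$ is an ACD code.
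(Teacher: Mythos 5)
Your proposal is correct. It begins exactly as the paper does, computing $GG^{t}=\omega G_XG_X^{t}+G_YG_Y^{t}=G_YG_Y^{t}=\left(\begin{smallmatrix} I_\beta & \omega I_\beta \\ \omega I_\beta & \omega^{2} I_\beta\end{smallmatrix}\right)$, but the two arguments diverge at the concluding step. The paper simply observes that the entries of this Gram matrix satisfy Case~II of Theorem~\ref{ACD-Three-Cases} (off-diagonal entries in $\{0,\omega\}$, diagonal entries $1$ or $\omega^{2}$, hence not in $\{0,\omega\}$) and invokes Corollary~\ref{coro- three-cases}. You instead carry out a self-contained kernel computation: writing $\lambda=(\lambda^{(1)}\mid\lambda^{(2)})\in\F_2^{2\beta}$, you show $\lambda GG^{t}=\bigl(\lambda^{(1)}+\omega\lambda^{(2)}\bigm|\omega(\lambda^{(1)}+\omega\lambda^{(2)})\bigr)$ and use the fact that $\{1,\omega\}$ is an $\F_2$-basis of $\F_4$ to force $\lambda=0$. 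Both are valid; your route makes explicit a point the paper's machinery handles only implicitly, namely that $GG^{t}$ is singular as an $\F_4$-matrix and that this is harmless because codewords are only $\F_2$-combinations of the rows. What you lose is brevity and reuse of the already-established sufficient conditions; what you gain is a transparent, structure-specific verification that does not depend on recognizing the Case~II pattern. Your identification of why Theorem~\ref{invertible-LCD} cannot be applied verbatim is accurate and is a worthwhile observation.
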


\begin{proof} We have $G{G}^{t} = \omega {G_X G_X^{t}}  + G_Y G_Y^{t} = G_Y G_Y^{t}$, and
$G_Y G_Y^{t}  = \left(\begin{matrix}
 I_{\beta}  &    \omega{I}_{\beta} \\
   \omega{I}_{\beta}  &  \omega^2{I}_{\beta}
\end{matrix}\right)$. We observe that $a_{ij} \in\{ 0, \omega\}$, for $i\not =j$ and $ a_{ii} \notin\{ 0, \omega\}$. Hence, according to Corollary \ref{coro- three-cases}, $C$ is an $ACD$ code.
\end{proof}
\begin{example} Let $C$ be an $\mathbb{F}_{2}\mathbb{F}_{4}$-additive code of type $\left( 6 ,2 ;2,0,2 \right)$, with
generator matrix:
$$
G=\left(\begin{matrix}
 1 & 1  & 1 & 1 & 1 & 1 & \bigm| & 1 & 0   \\
 1 & 1  & 0 & 0 & 0 & 0 & \bigm| & 0 & 1  \\
 0 & 0  & 1 & 1 & 0 & 0 & \bigm| & \omega & 0   \\
 1 & 1  & 1 & 1 & 0 & 0 & \bigm| & 0 & \omega
\end{matrix}\right)
.$$
It is clear that $G_X$ generates a binary self-orthogonal code. Therefore, it follows from Proposition \ref{prop38: span} that $C$ is an ACD code.
\end{example}

Note that even if the punctured codes $C_X$ and $C_Y$ are
binary LCD and quaternary ACD codes, respectively, this does not necessarily imply that the code $C$ is an ACD code, as demonstrated in the following example.

\begin{example} Let $C$ be an $\mathbb{F}_{2}\mathbb{F}_{4}$-additive code of type $\left( 2 ,4 ;1,1,0 \right)$, with
generator matrix:
$$
G=\left(\begin{matrix}
    1 & 1 & \bigm| & 0 & 0 & \omega & \omega \\
    \hline
    0 & 1 & \bigm| & 1 & \omega & 1&0  \\
   0 & 0 & \bigm| & \omega & 0 & 0& 0
\end{matrix}\right)
.$$
Since there exists a nonzero vector $(1, 1 | 0, 0, \omega, \omega ) \in C \cap C^{\perp}$, the code $C$ is not an ACD code.
Nevertheless, $C_X$ is an LCD code, and $C_Y$ is an ACD code.
\end{example}

\section{Binary LCD Codes from ACD Codes over $\mathbb{F}_2 \mathbb{F}_4$} \label{sec6:per}

In general, if $ C $ is an ACD code, its binary image is not necessarily an LCD code. In this section, we provide sufficient conditions under which the image of an ACD code over $\mathbb{F}_2 \mathbb{F}_4$ results in a binary LCD code. Moreover, we show that if the binary image of a code $C$ is an LCD code, then $C$ itself is an ACD code. 
Let us begin with the following lemma, which will be used in the proof of the next theorem.

\begin{lemma}
\label{lem 30:span}
Let $C$ be an additive code over $\mathbb{F}_4$ generated by a matrix $G = G_1 + \omega G_2$, where $G_1$ and $G_2$ are matrices with entries in $\mathbb{F}_2$. If $C$ is a self-orthogonal code with respect to the Euclidean inner product over $\mathbb{F}_4$, then the following identities hold:
$$
G_1 G_1^{t} + G_2 G_2^{t} = 0 \quad \text{and} \quad G_1 G_2^{t} + G_2 G_1^{t} + G_2 G_2^{t} = 0.
$$
\end{lemma}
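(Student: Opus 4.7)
The plan is to translate the self-orthogonality condition $GG^{t}=0$ into an identity between binary matrices and then separate the real and $\omega$-components using the fact that $\{1,\omega\}$ is an $\mathbb{F}_{2}$-basis of $\mathbb{F}_{4}$.

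First I would write $G=G_{1}+\omega G_{2}$ and observe that, since the Euclidean inner product over $\mathbb{F}_{4}$ does not involve any conjugation and the entries of $G_{1},G_{2}$ are binary, we have $G^{t}=G_{1}^{t}+\omega G_{2}^{t}$. Self-orthogonality of $C$ then gives
\[
0 = GG^{t} = (G_{1}+\omega G_{2})(G_{1}^{t}+\omega G_{2}^{t}) = G_{1}G_{1}^{t} + \omega\bigl(G_{1}G_{2}^{t}+G_{2}G_{1}^{t}\bigr) + \omega^{2} G_{2}G_{2}^{t}.
\]

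Next I would use the defining relation $\omega^{2}=\omega+1$ in $\mathbb{F}_{4}$ to eliminate $\omega^{2}$ and regroup by the $\mathbb{F}_{2}$-basis $\{1,\omega\}$:
\[
0 = \bigl(G_{1}G_{1}^{t}+G_{2}G_{2}^{t}\bigr) + \omega\bigl(G_{1}G_{2}^{t}+G_{2}G_{1}^{t}+G_{2}G_{2}^{t}\bigr).
\]
Since $G_{1}G_{1}^{t}+G_{2}G_{2}^{t}$ and $G_{1}G_{2}^{t}+G_{2}G_{1}^{t}+G_{2}G_{2}^{t}$ are both binary matrices (entrywise in $\mathbb{F}_{2}$) and $\{1,\omega\}$ is an $\mathbb{F}_{2}$-basis of $\mathbb{F}_{4}$, each of these two matrices must be identically zero, giving exactly the two claimed identities.

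I do not anticipate a real obstacle here; the only subtle point is to justify that the coefficients of $1$ and $\omega$ in the expansion above are genuinely binary matrices, so that the $\mathbb{F}_{2}$-linear independence of $1$ and $\omega$ in $\mathbb{F}_{4}$ can be invoked entrywise. This follows immediately because all entries of $G_{1}$ and $G_{2}$, and hence of their pairwise products and transposes, lie in $\mathbb{F}_{2}$.
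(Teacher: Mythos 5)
Your proof is correct and follows essentially the same route as the paper: both expand $GG^{t}=(G_{1}+\omega G_{2})(G_{1}^{t}+\omega G_{2}^{t})$, reduce $\omega^{2}$ via $\omega^{2}=\omega+1$, and separate the $1$- and $\omega$-components using that $\{1,\omega\}$ is an $\mathbb{F}_{2}$-basis of $\mathbb{F}_{4}$. Your version is in fact slightly more careful than the paper's, since it explicitly justifies why the two coefficient matrices are binary and hence must each vanish.
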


\begin{proof}
Since $C$ is a self-orthogonal code, we have
$$
G G^{t} = G_1 G_1^{t} + G_2 G_2^{t} + \omega (G_1 G_2^{t} + G_2 G_1^{t} + G_2 G_2^{t}) = 0.
$$
This implies that
$
G_1 G_1^{t} + G_2 G_2^{t} = 0 \quad \text{and} \quad G_1 G_2^{t} + G_2 G_1^{t} + G_2 G_2^{t} = 0.
$
\end{proof}

\begin{theorem}
\label{the 41}
Let $C$ be an ACD code over $\F_2 \F_4$ with generator matrix $G = \left( G_X  \,\middle|\, G_Y \right)$, where $ G_X$ generates a binary LCD code and $ G_Y $ generates a self-orthogonal additive code over $\mathbb{F}_4$. If the binary part of every nonzero codeword in $C$ is nonzero, then $C' = W(C)$ is a binary LCD code.
\end{theorem}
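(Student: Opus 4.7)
The plan is to apply Massey's characterisation (Proposition~\ref{LCD codes}): to show $W(C)$ is LCD, I will exhibit a generator matrix $W(G)$ for $W(C)$ and prove that $W(G)\,W(G)^t$ is invertible over $\F_2$. The key simplification is that the self-orthogonality of $G_Y$ over $\F_4$ will collapse the quaternary contribution in $W(G)\,W(G)^t$, leaving only $G_X G_X^t$, which is invertible by the LCD hypothesis on $G_X$.

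First, I would decompose $G_Y = G_{Y,1} + \omega G_{Y,2}$ with $G_{Y,1}, G_{Y,2}$ binary matrices, so that applying the map $W$ row-wise yields
\begin{equation*}
W(G) = \bigl(G_X \,\bigm|\, G_{Y,1} + G_{Y,2} \,\bigm|\, G_{Y,2}\bigr).
\end{equation*}
Then I would expand $W(G)\,W(G)^t$ as a sum of three block products and cancel the repeated $G_{Y,2} G_{Y,2}^t$ term using characteristic $2$, giving
\begin{equation*}
W(G)\,W(G)^t = G_X G_X^t + G_{Y,1} G_{Y,1}^t + G_{Y,1} G_{Y,2}^t + G_{Y,2} G_{Y,1}^t.
\end{equation*}
The two identities furnished by Lemma~\ref{lem 30:span}, namely $G_{Y,1}G_{Y,1}^t = G_{Y,2}G_{Y,2}^t$ and $G_{Y,1}G_{Y,2}^t + G_{Y,2}G_{Y,1}^t = G_{Y,2}G_{Y,2}^t$, substitute in to produce $G_{Y,2}G_{Y,2}^t + G_{Y,2}G_{Y,2}^t = 0$ in the $Y$-part. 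The upshot is the clean identity $W(G)\,W(G)^t = G_X G_X^t$ over $\F_2$.

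Next I would handle the rank bookkeeping needed to invoke Massey. The hypothesis that the binary part of every nonzero codeword of $C$ is nonzero is equivalent to the statement that no nontrivial $\F_2$-combination of rows of $G$ has zero $X$-part, i.e., the rows of $G_X$ are $\F_2$-linearly independent; together with the LCD hypothesis on the row space of $G_X$, Proposition~\ref{LCD codes} then guarantees that $G_X G_X^t$ is invertible. Moreover, since $W$ is an $\F_2$-linear bijection and the rows of $G$ are $\F_2$-linearly independent, the rows of $W(G)$ are also $\F_2$-linearly independent, so $W(G)$ is a genuine generator matrix of $W(C)$. Combining $W(G)\,W(G)^t = G_X G_X^t$ with the invertibility of $G_X G_X^t$ and applying Proposition~\ref{LCD codes} to $W(C)$ gives that $W(C)$ is a binary LCD code.

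The main obstacle is the bookkeeping in the second step: it is essential that Lemma~\ref{lem 30:span} was stated for $\F_4$-self-orthogonality over the Euclidean inner product (not the mixed inner product $\langle\cdot,\cdot\rangle_4$), so I would make sure the hypothesis on $G_Y$ is being used in the correct sense before substituting. Apart from that, the argument is a direct matrix computation.
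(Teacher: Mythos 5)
Your proposal is correct and follows essentially the same route as the paper's proof: decompose $G_Y = G_1 + \omega G_2$, compute $W(G)\,W(G)^t = G_X G_X^t + G_1G_1^t + G_1G_2^t + G_2G_1^t$, cancel the quaternary contribution via Lemma~\ref{lem 30:span}, and invoke Proposition~\ref{LCD codes} on the resulting identity $W(G)\,W(G)^t = G_X G_X^t$. Your added care about the rank bookkeeping (the nonzero-binary-part hypothesis forcing the rows of $G_X$ to be independent so that Massey's criterion applies to $G_X G_X^t$) only makes explicit what the paper leaves implicit.
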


\begin{proof}
Let $G_Y = G_1 + \omega G_2$, where $G_1$ and $G_2$ are matrices with entries in $\mathbb{F}_2$, and let $C$ be an ACD code generated by
$
G = \left( G_X \,\middle|\, G_1 + \omega G_2 \right)
.$ Applying the map $W$ to $G$, we obtain
$
G' = W(G) = \left(G_X \,\middle|\, G_1 + G_2 \,\middle|\, G_2 \right).
$
The Gram matrix is computed as:
\begin{align*}
G' {G'}^{t} = G_X {G_X}^{t}+ G_1 G_1^{t} + G_1 G_2^{t} + G_2 G_1^{t}.
\end{align*} Using Lemma~\ref{lem 30:span}, we obtain
$G' {G'}^{t} = G_X{G_X}^{t}.$
Since $ G_X$ generates a binary LCD code, and the binary part of every nonzero codeword in $C$ is nonzero, we have $\det(G_X {G_X}^{t}) \ne 0$. Therefore, by Proposition \ref{LCD codes}, $C'$ is a binary LCD code.
\end{proof}

We now present two corollaries that follow directly from Theorem \ref {the 41} and are easy to verify.

\begin{corollary}
\label{th45: span}
Let $C$ be an ACD code over $\F_2 \F_4$ with generator matrix $G=\left(G_{X}|G_{Y}\right) .$ If $G_X G_X^{t} = I_{\alpha}$ and $G_{Y}$ generates an $\mathbb{F}_{4}$-additive self-orthogonal code, then $C' = W(C)$ is a binary LCD code.
\end{corollary}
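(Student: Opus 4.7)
The plan is to reduce directly to Theorem~\ref{the 41} by verifying its three hypotheses from those of the corollary. Two are essentially immediate: the assumption on $G_Y$ coincides exactly with one of the hypotheses there, and since $I_\alpha$ is trivially invertible, Proposition~\ref{LCD codes} shows that $G_X$ generates a binary LCD code. The only nontrivial point is to check that the binary part of every nonzero codeword of $C$ is nonzero.

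For this, I would first observe that $G_X G_X^{t} = I_\alpha$ forces the rows of $G_X$ to be linearly independent over $\mathbb{F}_2$: if a row vector $v$ satisfies $v G_X = \mathbf{0}$, then right-multiplying by $G_X^{t}$ gives $v = v(G_X G_X^{t}) = \mathbf{0}$. In particular this implies that the rows of $G = (G_X \mid G_Y)$ are themselves linearly independent, so every codeword of $C$ has a unique expansion $x = \sum_i \lambda_i V_i$ with $\lambda_i \in \mathbb{F}_2$ and with $V_i = (a_i \mid b_i)$ the rows of $G$. If $x \neq 0$, then not all $\lambda_i$ vanish; the binary part of $x$ is $\sum_i \lambda_i a_i$, a nontrivial $\mathbb{F}_2$-linear combination of the linearly independent rows $a_i$ of $G_X$, and is therefore nonzero.

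With all three hypotheses of Theorem~\ref{the 41} verified, a direct invocation yields that $C' = W(C)$ is a binary LCD code. There is no genuine obstacle in this argument: the corollary is a clean specialization of Theorem~\ref{the 41} in which the much stronger assumption $G_X G_X^{t} = I_\alpha$ simultaneously subsumes the LCD property of $C_X$ (via Proposition~\ref{LCD codes}) and the nonvanishing of the binary part of every nonzero codeword (via linear independence of the rows of $G_X$).
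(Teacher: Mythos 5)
Your proof is correct and follows the paper's intended route: the paper states this corollary without a written proof, remarking only that it ``follows directly from Theorem~\ref{the 41}'', and your argument supplies exactly the verification being left to the reader (invertibility of $G_XG_X^{t}=I_\alpha$ gives the LCD property of $C_X$ via Proposition~\ref{LCD codes}, and the resulting $\mathbb{F}_2$-linear independence of the rows of $G_X$ gives the nonvanishing of the binary part of every nonzero codeword). No discrepancy with the paper's approach.
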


\begin{corollary}
\label{prop46:span}
Let $C$ be an ACD code over $\F_2 \F_4$ generated by $G = \left( I_\alpha \,\middle|\, G_Y \right)$, where $G_Y $ generates a self-orthogonal additive code over $\mathbb{F}_4$. Then  $C' = W(C)$ is a binary LCD code.
\end{corollary}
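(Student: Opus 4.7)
The plan is to derive Corollary \ref{prop46:span} as a direct specialization of Theorem \ref{the 41} (equivalently, Corollary \ref{th45: span}) to the case $G_X = I_\alpha$. So I only need to check that the three hypotheses of Theorem \ref{the 41} are satisfied in this setting: (i) $G_X$ generates a binary LCD code, (ii) $G_Y$ generates an $\F_4$-additive self-orthogonal code, and (iii) the binary part of every nonzero codeword of $C$ is nonzero. Hypothesis (ii) is exactly the assumption made on $G_Y$, so there is nothing to verify there.

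For (i), I would simply note that $G_X = I_\alpha$ gives $G_X G_X^{t} = I_\alpha$, which is invertible over $\F_2$, so by Proposition \ref{LCD codes} the code generated by $G_X$ is LCD (in fact, it is the whole space $\F_2^\alpha$, whose Euclidean dual is $\{0\}$). This also matches the stronger hypothesis $G_X G_X^{t}=I_\alpha$ required in Corollary \ref{th45: span}, so one could alternatively appeal directly to that corollary.

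The only step that requires a small argument is (iii). Because $C$ is additive, every codeword has the form
\begin{equation*}
u = \sum_{i=1}^{\alpha} \lambda_i \,(e_i \mid g_i) = \Bigl(\sum_{i=1}^{\alpha} \lambda_i e_i \;\Big|\; \sum_{i=1}^{\alpha} \lambda_i g_i\Bigr),
\end{equation*}
with $\lambda_i \in \F_2$ and $g_i$ the $i$-th row of $G_Y$. The binary part of $u$ is $(\lambda_1,\dots,\lambda_\alpha)$, so it vanishes if and only if all $\lambda_i = 0$, i.e.\ if and only if $u = 0$. Hence every nonzero codeword has nonzero binary part.

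With (i)--(iii) in place, Theorem \ref{the 41} yields that $C' = W(C)$ is a binary LCD code, completing the proof. I do not anticipate any real obstacle: the result is a clean specialization, and the only mildly nontrivial point is the automatic verification of the ``nonzero binary part'' condition, which is immediate from the identity block $I_\alpha$ appearing on the binary side of $G$.
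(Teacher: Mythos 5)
Your proof is correct and takes essentially the same route as the paper, which states this corollary as a direct consequence of Theorem \ref{the 41} (``easy to verify'') without writing out the details. Your explicit check of the three hypotheses --- in particular that the identity block $I_\alpha$ forces every nonzero codeword to have a nonzero binary part --- is exactly the verification the paper leaves implicit.
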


\begin{example}
Let $C$ be an $\mathbb{F}_{2}\mathbb{F}_{4}$-additive code of type $\left( 2 ,3 ;0,1,0 \right)$ generated by:
$$
G = \begin{pmatrix}
1 & 0 & \bigm| & \omega & \omega^2 & 1 \\
0 & 1 & \bigm| & \omega^2 & 1 & \omega
\end{pmatrix}.
$$ It is clear that $G_Y$ generates a self-orthogonal additive code over $\mathbb{F}_4$.  
Therefore, by Corollary \ref{prop46:span}, $C'$ is a binary LCD code.
\end{example}

From the proof of Theorem \ref{the 41}, we observe that if $G_X {G_X}^{t}=0$, and $det( G_1 G_1^{t} + G_1 G_2^{t} + G_2 G_1^{t})\neq0$, then $det(G' {G'}^{t})\neq 0$. This leads us to the following theorem.

\begin{theorem}
\label{th 44:span}
Let $C$ be an $\F_2 \F_4$-additive code with generator matrix $G = \left(G_X |G_1 + \omega G_2 \right)$, where $G_1$ and $G_2$ are matrices with entries in $\mathbb{F}_2$. If $G_X$ generates a binary self-orthogonal code and $det( G_1 G_1^{t} + G_1 G_2^{t} + G_2 G_1^{t})\neq0,$ then $C'= W (C)$ is a binary LCD code.
\end{theorem}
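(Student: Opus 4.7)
The plan is to follow the proof architecture of Theorem~\ref{the 41}, relaxing self-orthogonality of $G_Y$ and compensating with the explicit nondegeneracy hypothesis on the residual Gram block. Since $W$ is a linear bijection and $C' = W(C)$ is the binary linear code generated by $G' := W(G)$, by Proposition~\ref{LCD codes} it suffices to prove that the $\F_2$-matrix $G'(G')^t$ is invertible.

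Writing $G_Y = G_1 + \omega G_2$ with $G_1, G_2$ binary, the definition of $W$ gives
\[
G' = W(G) = \bigl(\, G_X \,\bigm|\, G_1 + G_2 \,\bigm|\, G_2 \,\bigr).
\]
Computing the Gram matrix over $\F_2$, I would expand
\[
G'(G')^t = G_X G_X^t + (G_1 + G_2)(G_1 + G_2)^t + G_2 G_2^t.
\]
Distributing the middle product, the two copies of $G_2 G_2^t$ cancel modulo $2$, leaving
\[
G'(G')^t = G_X G_X^t + G_1 G_1^t + G_1 G_2^t + G_2 G_1^t.
\]
Since $G_X$ generates a binary self-orthogonal code, $G_X G_X^t = 0$, and the Gram matrix collapses to $G_1 G_1^t + G_1 G_2^t + G_2 G_1^t$, which by hypothesis has nonzero determinant. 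Therefore $G'(G')^t$ is invertible, and Proposition~\ref{LCD codes} yields that $C' = W(C)$ is a binary LCD code.

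I do not anticipate any serious obstacle: once $G'$ is written out, the argument is essentially the matrix bookkeeping already carried out in Theorem~\ref{the 41}. The only subtlety is the cancellation $G_2 G_2^t + G_2 G_2^t = 0$ in $\F_2$, which is precisely the reason the self-orthogonality of $G_Y$ (used in Theorem~\ref{the 41} via Lemma~\ref{lem 30:span}) can be dropped in favor of the weaker determinant condition on $G_1 G_1^t + G_1 G_2^t + G_2 G_1^t$.
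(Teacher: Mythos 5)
Your proposal is correct and follows essentially the same route as the paper: the paper's own proof simply defers to the computation in Theorem~\ref{the 41}, where the Gram matrix $G'(G')^t = G_XG_X^t + G_1G_1^t + G_1G_2^t + G_2G_1^t$ is obtained by exactly the mod-$2$ cancellation of $G_2G_2^t$ that you carry out explicitly, after which the self-orthogonality of $G_X$ and the determinant hypothesis finish the argument via Proposition~\ref{LCD codes}. Your write-up is in fact slightly more complete than the paper's, since it shows the expansion rather than citing it.
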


\begin{proof} The proof follows the same reasoning as the proof of Theorem \ref{the 41}.
\end{proof}

\begin{example}
Let $C$ be an $\mathbb{F}_{2}\mathbb{F}_{4}$-additive code of type $\left( 2 ,2 ;1,0,1 \right)$ generated by:
$$
G = \begin{pmatrix}
1 & 1 & \bigm| & \omega & \omega^2  \\
0 & 0 & \bigm| & 0 & 1  
\end{pmatrix},
$$ with $
G_1 = \begin{pmatrix}
0 & 1 \\
0 & 1   
\end{pmatrix}$, 
$G_2 = \begin{pmatrix}
1 & 1  \\
0 & 0   
\end{pmatrix}
$, and  $ G_1 G_1^{t} + G_1 G_2^{t} + G_2 G_1^{t} = \begin{pmatrix}
1 & 0  \\
0 & 1   
\end{pmatrix}$. Since $det( G_1 G_1^{t} + G_1 G_2^{t} + G_2 G_1^{t}) = 1,$  and $G_X$ generates a binary self-orthogonal code, it follows, by Theorem
\ref {th 44:span}, that $C'$ is a binary LCD code.
\end{example}

The following corollary is an immediate consequence of Theorem \ref{th 44:span}.

\begin{corollary}
Let $C$ be an $\F_2 \F_4$-additive code with generator matrix $G = \left(G_X |G_1 + \omega G_2 \right)$, where $G_1$ and $G_2$ are matrices with entries in $\mathbb{F}_2$. If $G_X$ generates a binary self-orthogonal code and $det( G_1 G_1^{t} + G_1 G_2^{t} + G_2 G_1^{t}) = 0,$ then $C'= W (C)$ is not a binary LCD code.
\end{corollary}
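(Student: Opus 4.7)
The plan is to obtain the corollary as the direct contrapositive of the Gram matrix computation performed in the proof of Theorem \ref{th 44:span}. First, I would write $G_Y = G_1 + \omega G_2$ and apply the linear bijection $W$ to the generator matrix, producing
\[
G' = W(G) = \left( G_X \,\middle|\, G_1 + G_2 \,\middle|\, G_2 \right).
\]
Because $W$ is an $\F_2$-linear bijection from $\F_2^{\alpha}\F_4^{\beta}$ to $\F_2^{\alpha+2\beta}$, and because the rows of $G$ form an $\F_2$-basis of $C$, their images under $W$ form an $\F_2$-basis of $C'=W(C)$. Hence $G'$ is a genuine binary generator matrix for $C'$, and Proposition \ref{LCD codes} applies to $C'$ with this matrix.

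Next, I would compute the binary Gram matrix $G'(G')^{t}$ exactly as in the proof of Theorem \ref{th 44:span}. Expanding block-by-block and using that we are working in characteristic $2$, the term $2\,G_2 G_2^{t}$ vanishes, so that
\[
G'(G')^{t} = G_X G_X^{t} + G_1 G_1^{t} + G_1 G_2^{t} + G_2 G_1^{t}.
\]
Since $G_X$ generates a binary self-orthogonal code, we have $G_X G_X^{t}=0$, leaving
\[
G'(G')^{t} = G_1 G_1^{t} + G_1 G_2^{t} + G_2 G_1^{t}.
\]

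Finally, the standing hypothesis $\det\!\bigl(G_1 G_1^{t} + G_1 G_2^{t} + G_2 G_1^{t}\bigr)=0$ gives $\det\!\bigl(G'(G')^{t}\bigr)=0$, so $G'(G')^{t}$ is singular. By the contrapositive of Proposition \ref{LCD codes}, $C'$ is not a binary LCD code, which is exactly what the corollary asserts.

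The only subtle point I anticipate is justifying that $G'$ really is a $k\times(\alpha+2\beta)$ generator matrix for $C'$ with $k=\dim_{\F_2}C'$, since Massey's criterion (Proposition \ref{LCD codes}) requires the Gram matrix to be formed from a basis; this is handled cleanly by the bijectivity of $W$ noted in Section \ref{sec2:per}. Once that is stated, the remainder of the proof is a direct computation that mirrors Theorem \ref{th 44:span} verbatim, so no further ingredients are needed.
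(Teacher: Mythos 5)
Your proof is correct and follows essentially the same route as the paper, which simply declares the corollary an immediate consequence of the Gram-matrix computation in Theorem \ref{th 44:span}: you recompute $G'(G')^{t}=G_XG_X^{t}+G_1G_1^{t}+G_1G_2^{t}+G_2G_1^{t}$, kill the first term by self-orthogonality, and invoke the converse direction of Massey's criterion (Proposition \ref{LCD codes}). Your extra remark that the rows of $G'$ form an $\F_2$-basis of $W(C)$ because $W$ is a linear bijection is a sensible detail the paper leaves implicit.
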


\begin{proposition}
\label{th 30:span}
Let $C$ be an ACD code over $\F_2 \F_4$, with generator matrix $G = \left(G_X |G_Y \right)$, where $G_X$ generates a binary self-orthogonal code and $ G_Y = \left(^{I_{\beta}}_ {\omega I_{\beta}}\right)$. Then $C'= W (C)$ is a binary LCD code.
\end{proposition}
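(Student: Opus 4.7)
The plan is to reduce this proposition to Theorem \ref{th 44:span}. That theorem needs two ingredients: (i) $G_X$ generates a binary self-orthogonal code, which is given, and (ii) once we write $G_Y = G_1 + \omega G_2$ with $G_1, G_2$ binary, the matrix $G_1 G_1^{t} + G_1 G_2^{t} + G_2 G_1^{t}$ is invertible over $\F_2$. So the only real task is to verify the determinant condition for the specific shape of $G_Y$ prescribed here.

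First I would make the decomposition explicit. Since
\[
G_Y = \begin{pmatrix} I_\beta \\ \omega I_\beta \end{pmatrix},
\]
the unique binary-plus-$\omega$-times-binary splitting is
\[
G_1 = \begin{pmatrix} I_\beta \\ 0 \end{pmatrix}, \qquad G_2 = \begin{pmatrix} 0 \\ I_\beta \end{pmatrix},
\]
both of size $2\beta \times \beta$. Then I would carry out the three $2\beta \times 2\beta$ products $G_1 G_1^{t}$, $G_1 G_2^{t}$, and $G_2 G_1^{t}$. Each consists of a single $I_\beta$ block in one of the four quadrants and zeros elsewhere, so summing over $\F_2$ yields
\[
M := G_1 G_1^{t} + G_1 G_2^{t} + G_2 G_1^{t} = \begin{pmatrix} I_\beta & I_\beta \\ I_\beta & 0 \end{pmatrix}.
\]

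Finally, I would compute $\det(M)$ via the Schur-complement formula with the invertible upper-left block:
\[
\det(M) = \det(I_\beta)\,\det\bigl(0 - I_\beta \, I_\beta^{-1}\, I_\beta\bigr) = \det(-I_\beta) = 1 \neq 0
\]
over $\F_2$. This supplies the missing hypothesis of Theorem \ref{th 44:span}, and hence $C' = W(C)$ is a binary LCD code.

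The main obstacle, if it deserves the name, is purely bookkeeping: keeping the $2\beta \times \beta$ block structure of $G_1, G_2$ straight while expanding the three products. An equivalent route that bypasses Theorem \ref{th 44:span} is to apply $W$ directly, obtaining $G' = W(G) = (G_X \mid G_1 + G_2 \mid G_2)$, and then invoke Proposition \ref{LCD codes} on $G'{G'}^{t}$; because $G_X G_X^{t} = 0$ by self-orthogonality, this Gram matrix collapses to exactly $M$, and the same one-line determinant computation closes the argument.
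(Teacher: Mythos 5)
Your proof is correct and takes essentially the same approach as the paper: the paper's own proof is exactly your ``equivalent route,'' applying $W$ directly, noting that $G'{G'}^{t}$ collapses to $\bigl(\begin{smallmatrix} I_{\beta} & I_{\beta} \\ I_{\beta} & 0 \end{smallmatrix}\bigr)$ once $G_X G_X^{t}=0$, and invoking Proposition~\ref{LCD codes}. Your primary route through Theorem~\ref{th 44:span} packages the identical block computation and determinant check through that theorem, so the two arguments coincide in substance.
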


\begin{proof} Let $C$ be an $\F_2 \F_4$-additive code with generator matrix $G = \left(G_X |G_Y \right)$, where $ G_Y =  \left(^{I_{\beta}}_ {\omega I_{\beta}}\right)$, and let $G' = W (G)$ be the generator matrix of $C'= W (C)$. We have $G'{G'}^{t} = G_X G_X^ {t} +  \begin{pmatrix}
 I_{\beta} &  I_{\beta} \\  I_{\beta}&  0_{\beta}
\end{pmatrix}$, where $0_{\beta}$ is the zero matrix of size $\beta$. Since $G_X$ generates a binary self-orthogonal code, then $G'{G'}^{t}= \begin{pmatrix}
 I_{\beta} &  I_{\beta} \\  I_{\beta}&  0_{\beta}
\end{pmatrix} $. \\$ det ( G'{G'}^{t}) = det(I_{\beta}) = 1 \not= 0$. It follows from Proposition \ref{LCD codes} that $C' = W (C)$ is a binary LCD code.
\end{proof}

\begin{example} Let $C$ be an $\mathbb{F}_{2}\mathbb{F}_{4}$-additive code of type $\left( 6 ,2 ;2,0,2 \right)$, with
generator matrix:
$$
G=\left(\begin{matrix}
 1 & 1  & 1 & 1 & 1 & 1 & \bigm| & 1 & 0   \\
 1 & 1  & 0 & 0 & 0 & 0 & \bigm| & 0 & 1  \\
 0 & 0  & 1 & 1 & 0 & 0 & \bigm| & \omega & 0   \\
 1 & 1  & 1 & 1 & 0 & 0 & \bigm| & 0 & \omega
\end{matrix}\right)
.$$
Since $G_X$ generates a binary self-orthogonal code, it follows, by Proposition \ref{th 30:span}, that $C'$ is a binary LCD code.
\end{example}

In general, verifying whether a code is a linear complementary dual (LCD) code is easier than checking whether it is an additive complementary dual (ACD) code. Therefore, the following important theorem shows that if the image of a code $C$ is an LCD code, then $C$ itself is an ACD code.

\begin{theorem}
\label{th image not ACD}
Let $C$ be an $\F_2\F_4$-additive code of length $n=\alpha+\beta$, and let $C'= W (C)$ be its binary image code of length $n'=\alpha+2\beta$. If $C$ is not an additive complementary dual (ACD) code, then its image $C'= W (C)$ is not a linear complementary dual (LCD)code.
\end{theorem}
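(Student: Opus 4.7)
The plan is to prove the statement by contrapositive, relying directly on Lemma~\ref{W(Dual-Code)} and the fact that $W$ is a linear bijection. Equivalently, I want to show that if $C$ is not an ACD code, then $W(C)$ cannot be an LCD code.

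First, I assume $C$ is not ACD. By the definition of ACD, this means $C \cap C^{\perp} \neq \{\mathbf{0}\}$, so there exists a nonzero codeword $u \in C \cap C^{\perp}$. Next, I apply the map $W$ to $u$. Since $W : \F_2^{\alpha}\F_4^{\beta} \to \F_2^{\alpha+2\beta}$ is a linear bijection (as recalled after its definition in Section~\ref{sec2:per}), the image $W(u)$ is a nonzero binary vector of length $\alpha+2\beta$.

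The two crucial memberships follow from how $W$ interacts with $C$ and $C^{\perp}$: from $u \in C$ I get $W(u) \in W(C)$, and from $u \in C^{\perp}$ together with Lemma~\ref{W(Dual-Code)} (which states $W(C^{\perp}) \subseteq W(C)^{\perp}$) I get $W(u) \in W(C)^{\perp}$. Combining these gives a nonzero vector $W(u) \in W(C) \cap W(C)^{\perp}$, which by definition means that $W(C)$ is not LCD. This yields the contrapositive and hence the theorem.

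There is no serious technical obstacle here; the statement is essentially a packaging of Lemma~\ref{W(Dual-Code)} combined with the injectivity of $W$. The only point one must be careful about is that the inclusion is only one-way ($W(C^{\perp}) \subseteq W(C)^{\perp}$, and the example in Section~\ref{sec3:per} shows equality can fail), but this one-sided inclusion is exactly what the contrapositive needs—we only require that a witness $u$ to $C \cap C^{\perp} \neq \{\mathbf{0}\}$ be transported into $W(C) \cap W(C)^{\perp}$, not the reverse direction. Thus the argument is complete in a single short step, and no further case analysis on the structure of $G$ or on the type $(\alpha,\beta;k_1,k_2',k_2'')$ is required.
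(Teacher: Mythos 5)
Your proposal is correct and follows essentially the same argument as the paper: take a nonzero $u\in C\cap C^{\perp}$, use the injectivity of $W$ to get $W(u)\neq 0$ in $W(C)$, and use Lemma~\ref{W(Dual-Code)} to place $W(u)$ in $W(C)^{\perp}$, so $W(C)$ is not LCD. The only cosmetic remark is that you describe this as a proof ``by contrapositive,'' whereas it is in fact a direct proof of the stated implication (the theorem is already phrased as ``if $C$ is not ACD, then $C'$ is not LCD''); the mathematics is unaffected.
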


\begin{proof} Assume that $C$ is not an additive complementary dual code. Then there exists a nonzero vector $ u \in C \cap C^{\perp}$. Since $ u \in C $, and $W$ is a linear bijection, it follows that $ W(u)\neq0 \in W(C) = C'$. \\ Also, since $ u \in C^{\perp}$, we have $ W(u) \in W(C^\perp)$.
From lemma \ref{W(Dual-Code)}, we obtain that $ W(u) \in (W(C))^\perp = (C')^\perp$, which means $ W(u) \in C' \cap (C')^\perp$. Thus, $ W(u)$ is a nonzero codeword in  $ C'\cap {C'}^\perp$. Therefore, $C'= W (C)$ is not an LCD code.
\end{proof}\\
The converse of Theorem \ref{th image not ACD} does not hold in general, as illustrated by the following example.

\begin{example} Let $C$ be an $\mathbb{F}_{2}\mathbb{F}_{4}$-additive code of type $\left( 2 ,3 ;1,1,0 \right)$, with
generator matrix:
$$
G=\left(\begin{matrix}
    1 & 1 & \bigm| & 0 & 0 & \omega \\\hline
    0 & 1 & \bigm| & 1 & \omega & 1 \\
   0 & 0 & \bigm| & \omega & 0 & 0
\end{matrix}\right)
,$$
and let $ C' = W(C)$ be a binary linear code with generator matrix $G' = W(G)$, defined as follows:
$$ G' = W(G) = \begin{pmatrix}
1 & 1  & 0 & 0 & 1 & 0  & 0 & 1 \\
0 & 1  & 1 & 1 & 1 & 0 &  1 & 0 \\
0 & 0  & 1 & 0 & 0 & 1 &  0 & 0\\
\end{pmatrix}.$$
We compute the Gram matrix: $$G'{G'}^{t}  = \begin{pmatrix}
0  &  0& 0\\ 0 & 1& 1 \\ 0 & 1& 0
\end{pmatrix}.$$
Observe that $ det ( G'{G'}^{t}) = 0 $. Hence, by Proposition \ref{LCD codes}, $C'$ is not an LCD code. However, since $C \cap C^\perp = \mathbf\{0\}$, we conclude that $C$ is an ACD code. This shows that the converse of Theorem \ref{th image not ACD} does not hold in general.
\end{example}

Since the image of an ACD code is not necessarily an LCD code, the next result provides a full characterization of when this equivalence holds. Based on all the previous results, we are now able to state the precise conditions under which $C$ is ACD if and only if its image is LCD.

\begin{proposition} Let $C$ be an $\F_2\F_4$-additive code with generator matrix $G = \left(G_X |G_Y \right)$, and let $C'= W (C)$ be its binary image. If one of the following conditions holds: 
\begin{enumerate}
    \item $ G_X $ generates a binary LCD code, $ G_Y $ generates an $\mathbb{F}_{4}$-additive self-orthogonal code, and the binary part of every nonzero codeword in $C$ is nonzero.
   \item $G_X G_X^{t} = I_{\alpha}$, and $G_{Y}$ generates an $\mathbb{F}_{4}$-additive self-orthogonal code.
    \item $ G_X = I_{\alpha}$, and $ G_Y $ generates an $\mathbb{F}_{4}$-additive self-orthogonal code . 
    \item $ G_X $ generates a binary self-orthogonal code, and $ G_Y = \left(^{I_{\beta}}_ {\omega I_{\beta}}\right)$.
\end{enumerate}
Then, $C$ is an ACD code if and only if $C'= W (C)$ is an LCD code.
\end{proposition}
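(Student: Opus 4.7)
The proof naturally splits along the two directions of the biconditional, and the plan is to handle one direction once and for all, then verify the other by cases.

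For the $(\Leftarrow)$ direction, the plan is to simply invoke Theorem~\ref{th image not ACD}, which holds without any extra hypothesis on $G_X$ or $G_Y$. Its contrapositive says exactly that if $C' = W(C)$ is an LCD code, then $C$ is an ACD code. So this implication is already available unconditionally and needs no further work beyond a one-line citation.

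For the $(\Rightarrow)$ direction, the observation to exploit is that under each of the four hypotheses listed in the proposition, both properties ($C$ is ACD and $C' = W(C)$ is LCD) have already been established in the paper, so the implication holds vacuously/trivially once we collect the right references. Concretely, I would walk through the four cases as follows. Under condition~(1), Theorem~\ref{th23: span} gives that $C$ is ACD and Theorem~\ref{the 41} gives that $C'$ is LCD. Under condition~(2), Proposition~\ref{th28: span} gives the ACD property of $C$ and Corollary~\ref{th45: span} gives the LCD property of $C'$. Condition~(3) is the obvious specialization of~(2) with $G_X = I_\alpha$, so Corollary~\ref{prop 29} and Corollary~\ref{prop46:span} apply directly. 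Finally, under condition~(4), Proposition~\ref{prop38: span} shows $C$ is ACD and Proposition~\ref{th 30:span} shows $C'$ is LCD.

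Since in each case both sides of the biconditional hold unconditionally from prior results, the equivalence is trivially true in all four situations, which completes the proof when combined with the general $(\Leftarrow)$ direction from Theorem~\ref{th image not ACD}. The main ``obstacle'' here is really just organizational rather than mathematical: the proof consists almost entirely of correctly pairing each hypothesis with the two existing results (one for ACD-ness of $C$, one for LCD-ness of $C'$) that it activates. No new computation is required; one could even phrase the proof as a single table matching each of the four conditions to the pair of earlier statements that jointly yield the desired equivalence.
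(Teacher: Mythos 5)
Your proposal is correct and matches the paper's intent: the paper gives no explicit proof, stating only that the proposition follows ``based on all the previous results,'' and your argument is exactly that assembly --- the unconditional $(\Leftarrow)$ direction from the contrapositive of Theorem~\ref{th image not ACD}, and the $(\Rightarrow)$ direction holding trivially because each of the four hypotheses independently forces both sides of the equivalence via the cited earlier results. (One pedantic note: the equivalence holds \emph{trivially} because both sides are true, not \emph{vacuously}.)
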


\begin{remark}  
Since a self-dual code is, by definition, always self-orthogonal, the results we gave for the punctured code $ C_X$, under the assumption of self-orthogonality, also apply to self-dual codes. Therefore, all the theorems, propositions, and corollaries we presented remain valid when working with self-dual codes in that context.
\end{remark}

We now present an example of an ACD code such that its image is a good binary LCD code.

\begin{example} Let $C$ be an $\mathbb{F}_{2}\mathbb{F}_{4}$-additive code of type $\left( 4 ,6 ;2,2,0 \right)$, with
generator matrix:
$$
G=\left[
\begin{array}{cccc|cccccccccccc}
1&0&1&1&0&0&0&0&\omega&0\\
0&1 &0&1&0&0 &0&0 &\omega&\omega\\ 
\hline
0&0 &1&0&1&0 &\omega&0 &1&1  \\
0&0 &0&1 &0&1 &0&\omega &1&0 \\ 
0&0 &0&0&\omega&0 &\omega&\omega &1&0  \\
0&0 &0&0 &0&\omega &\omega&0 &0&1 
\end{array}
\right]
.$$   
Let $ C' = W(C)$ denote the binary image of $C$, where $ C'$ is generated by the matrix $G' = W(G)$ defined as follows:
$$
G'=\left[
\begin{array}{cccccccccccccccc}
1&0&1&1&0&0&0&0&1&0&0&0&0&0&1&0\\
0&1&0&1&0&0&0&0&1&1&0&0&0&0&1&1\\ 
0&0&1&0&1&0&1&0&1&1&0&0&1&0&0&0\\
0&0&0&1&0&1&0&1&1&0&0&0&0&1&0&0\\ 
0&0&0&0&1&0&1&1&1&0&1&0&1&1&0&0\\
0&0&0&0&0&1&1&0&0&1&0&1&1&0&0&0 
\end{array}
\right]
.$$   
The best known code with parameters $[16, 6]$ has a minimum distance of $6$. Our code has parameters $[16, 6, 5]$, so it is not optimal. But since the minimum distance is only one less, we can still say that it is a good code.\\
We compute the Gram matrix:  
 $$G'{G'}^{t}  =  \begin{pmatrix}
1  &  1& 0&  0& 1& 0\\ 1  &  0& 0&  0& 1& 1 \\ 0  &  0& 0&  1& 0& 1\\0  &  0& 1&  1& 1& 1\\1  &  1& 0& 1 & 1& 0\\0  &  1& 1&  1& 0& 1
\end{pmatrix},$$
and $det ( G'{G'}^{t}) = 1 $. Therefore, by Proposition \ref{LCD codes}, $C'$ is an LCD code. Furthermore, by applying Theorem \ref{th image not ACD}, we deduce that $C$ is an ACD code.
\end{example}

We now present another example of an additive complementary dual (ACD) code over $\F_2\F_4$, 
whose binary image is an optimal linear complementary dual (LCD) code. 

\begin{example} Let $C$ be an $\mathbb{F}_{2}\mathbb{F}_{4}$-additive code of type $\left( 3 ,2 ;1,0,1 \right)$, defined by the following generator matrix:
$$
G=\left(\begin{matrix}
    1 & 1 & 1 &\bigm|  & \omega & \omega^{2}\\
    0 & 0 & 0 &\bigm| & \omega &  \omega
\end{matrix}\right)
,$$ and let $ C' = W(C)$ be the binary image of $C$ generated by $G' = W(G)$, defined as follows:
$$ G' = W(G) = \begin{pmatrix}
1 & 1  & 1 & 1 & 0 & 1  & 1 \\
0 & 0  & 0 & 1 & 1 & 1 & 1  
\end{pmatrix}.$$
We have $det ( G'{G'}^{t}) = 1 $. Therefore, by Proposition \ref{LCD codes}, $C'$ is an LCD code.
Since the minimum distance is $4$, which is the best known for binary codes of length $7$ and dimension $2$, the code $C'$ can be considered an optimal LCD code.
Applying Theorem \ref{th image not ACD} leads us to conclude that $C$ is an ACD code.
\end{example}

As we did previously, we now give the final example, which shows that the image of an ACD code can be an optimal binary code that is not LCD. 

\begin{example} Let $C$ be an $\mathbb{F}_{2}\mathbb{F}_{4}$-additive code of type $\left( 3 ,2 ;0,1,0 \right)$ generated by:
$$
G=\left(\begin{matrix}
    1 & 1 & 1 &\bigm|  & 1 & \omega^{2}\\
    1 & 1 & 1 &\bigm| & \omega^{2} &  1 
\end{matrix}\right)
,$$ and let $ C' = W(C)$ be the binary image of $C$ generated by $G' = W(G)$, defined as follows:
$$ G' = W(G) = \begin{pmatrix}
1 & 1  & 1 & 1 & 0 & 0  & 1 \\
1 & 1  & 1 & 0 & 1 & 1 & 0  
\end{pmatrix}.$$
We have $det ( G'{G'}^{t}) = 0 $. Therefore, by Proposition \ref{LCD codes}, $C'$ is not an LCD code. 
Because the minimum distance equals the best known for a $[7,2]$ binary code, we consider $C'$ to be an optimal code.
Since $C$ satisfies the second condition of Theorem \ref{ACD-Three-Cases}, it follows that it is an ACD code.
\end{example}

\section{Conclusion and future work}\label{sec7:per}

In this paper, we investigated the structure and properties of additive
complementary dual (ACD) codes over the mixed alphabets $\mathbb{F}_{2}\mathbb{F}%
_{4}$ under a certain inner product defined over the ring $\mathbb{F}_{2}%
\mathbb{F}_{4}.$ We have provided results and established sufficient
conditions under which such codes are (ACD) codes. We have also shown that
ACD codes over $\mathbb{F}_{2}\mathbb{F}_{4}$ can be applied to construct
binary linear complementary dual codes as images of ACD codes under the linear map $W$. In the final part of this work, we showed that if the binary image of an $\mathbb{F}_2\mathbb{F}_4$-additive code is an LCD code, then the original code must also be an ACD code. We have also provided an example of an ACD code whose binary image is a distance-optimal binary LCD code. For future work, it will
be interesting to study ACD codes over $\mathbb{F}_{2}\mathbb{F}_{4}$ with
different definitions of inner products and study the conditions needed to
construct ACD codes.

\end{document}